\newtheorem{lemma}{Lemma}
\newtheorem{corollary}{Corollary}
\newcommand{\on}{\ensuremath{\mathsf{on}}}
\newcommand{\off}{\ensuremath{\mathsf{off}}}
\newcommand{\ton}[1]{\ensuremath{t_{#1}^{\mathsf{on}}}}
\newcommand{\toff}[1]{\ensuremath{t_{#1}^{\mathsf{off}}}}
\newcommand{\qmax}[1]{\ensuremath{q_{#1}^\mathsf{max}}}
\newcommand{\mbar}[1]{\ensuremath{\bar{m}_{#1}}}
\newcommand{\qhat}[1]{\ensuremath{\hat{q}_{#1}}}
\newcommand{\ns}[1]{\ensuremath{\bar{s}_{#1}}}
\newcommand{\floor}[1]
{\left\lfloor #1 \right\rfloor}
\newcommand{\partder}[2]
{\frac{\partial #1}{\partial #2}}
\newcommand{\delay}[2]{\ensuremath{D_{#1,#2}}}
\newcommand{\maxdelay}[2]{\ensuremath{\hat{D}_{#1,#2}}}
\renewcommand{\paragraph}[1]{\hspace{2ex}\textbf{#1}.\hspace{2ex}}
\pgfplotsset{compat=newest}
\renewcommand{\section}{\@startsection{section}{1}{\z@}%
{0.6\baselineskip plus 0.25\baselineskip minus 0.2\baselineskip}
{0.4\baselineskip plus 0.25\baselineskip minus 0.2\baselineskip}
{\normalfont\large\bfseries}}%
\renewcommand{\subsection}{\@startsection{subsection}{2}{\z@}%
{0.6\baselineskip plus 0.25\baselineskip minus 0.2\baselineskip}
{0.4\baselineskip plus 0.25\baselineskip minus 0.2\baselineskip}
{\normalfont\sublargesize\bfseries}}%
\title{Cost minimization of network services \\ with buffer and
  end-to-end deadline constraints}
\author{ \IEEEauthorblockN{Victor Millnert\IEEEauthorrefmark{1}, Johan
    Eker\IEEEauthorrefmark{1}\IEEEauthorrefmark{2}, Enrico
    Bini\IEEEauthorrefmark{3}}
  \IEEEauthorblockA{\IEEEauthorrefmark{1}Lund University, Sweden}
  \IEEEauthorblockA{\IEEEauthorrefmark{2}Ericsson Research, Sweden}
  \IEEEauthorblockA{\IEEEauthorrefmark{3}Scuola Superiore Sant'Anna,
    Pisa, Italy} }
\begin{document}

\maketitle

\begin{abstract}
  Cloud computing technology provides the means to share physical
  resources among multiple users and data center tenants by exposing
  them as virtual resources.  There is a strong industrial drive to
  use similar technology and concepts to provide timing sensitive
  services. One such is virtual networking services, so called
  services chains, which consist of several interconnected virtual
  network functions. This allows for the capacity to be scaled up and
  down by adding or removing virtual resources. In this work, we
  develop a model of a service chain and pose the  dynamic
  allocation of resources as an optimization problem. We design and
  present a set of strategies to allot virtual network nodes in an
  optimal fashion subject to latency and buffer constraints.
\end{abstract}

\section{Introduction}
\label{sec:intro}
Over the last years, cloud computing has swiftly transformed the
IT infrastructure landscape, leading to large cost-savings for
deployment of a wide range of IT applications. Some main
characteristics of cloud computing are resource pooling, elasticity,
and metering. Physical resources such as compute nodes, storage nodes,
and network fabrics are shared among tenants. Virtual resource
elasticity brings the ability to dynamically change the amount of
allocated resources, for example as a function of workload or cost.
Resource usage is metered and in most pricing models the tenant only
pays for the allocated capacity.

While cloud technology initially was mostly used for IT applications,
e.g. web servers, databases, etc., it is rapidly finding its way into
new domains. One such domain is processing of network packages. Today
network services are packaged as physical appliances that are
connected together using physical network. Network services consist of
interconnected network functions (NF). Examples of network functions
are firewalls, deep packet inspections, transcoding, etc. A recent
initiative from the standardisation body ETSI (European
Telecommunications Standards Institute) addresses the standardisation
of virtual network services under the name Network Functions
Virtualisation (NFV)~\cite{ETSI-NFV}. The expected benefits from this
are, among others, better hardware utilisation and more flexibility,
which translate into reduced capital and operating expenses (CAPEX and
OPEX).  A number of interesting use cases are found
in~\cite{ETSI-NFV-UC}, and in this technical report we are investigating the one
referred to as Virtual Network Functions Forwarding Graphs, see
Figure~\ref{fig:NFV-VNF}.
\begin{figure}[htb]
  \centering
  \includegraphics[width=\columnwidth]{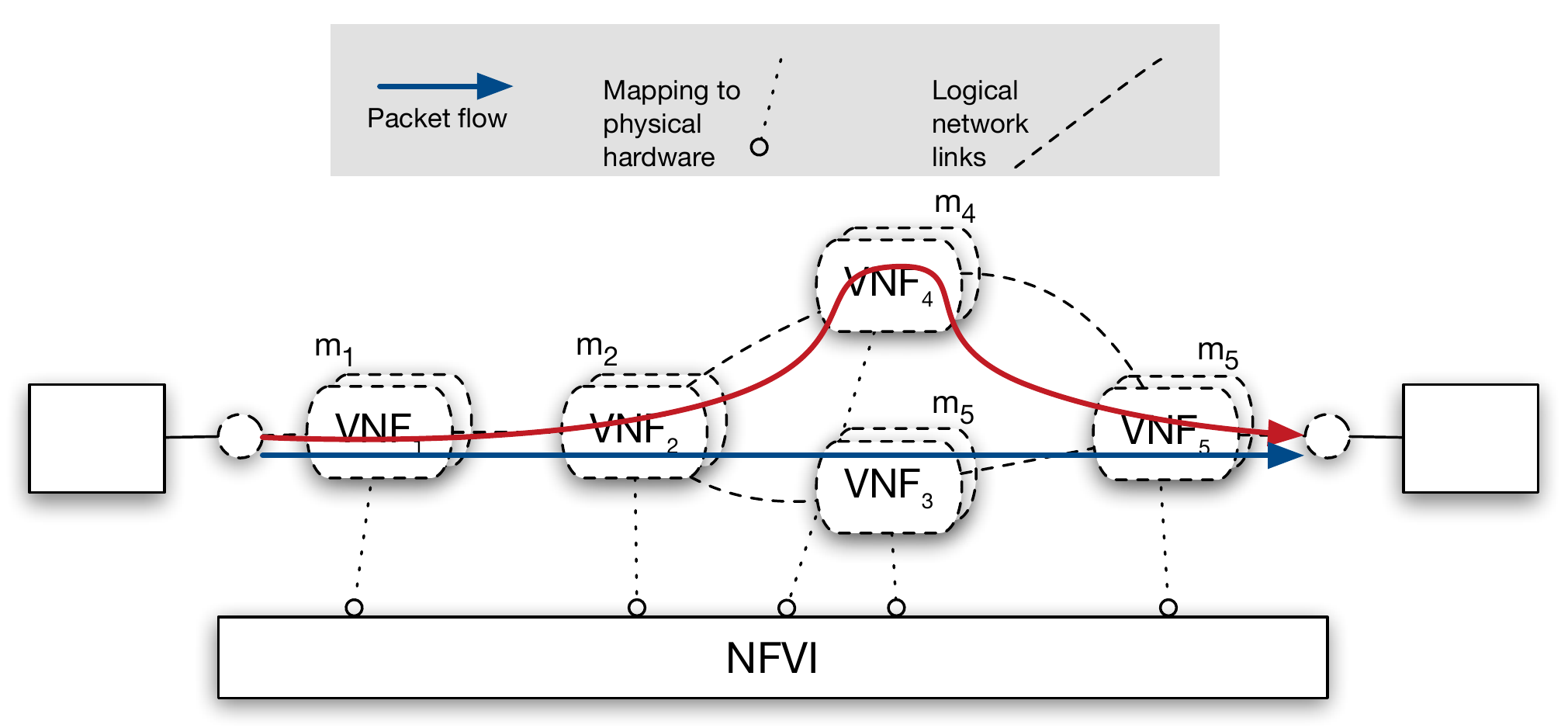}
  \caption{Several virtual networking functions (VNF) are connected
    together to provide a set of services. A packet flow is a specific
    path through the VNFs. Connected VNFs are referred to as virtual
    forwarding graphs or service chains. The VNFs are mapped onto
    physical hardware, i.e. compute nodes and network fabrics and this
    underlying hardware infrastructure is referred to as
    NFVI.}
  \label{fig:NFV-VNF}
\end{figure}

We investigate the allocation of virtual resources to a given packet
flow, i.e. what is the most cost efficient way to allocate VNFs with a
given capacity that still provide a network service within a given
latency bound? The distilled problem is illustrated as the packet
flows in Figure~\ref{fig:NFV-VNF}. The forwarding graph is implemented
as a chain of virtual network nodes, also known as a service
chains. To ensure that the capacity of a service chain matches the
time-varying load, the number of instances $m_i$ of each individual
network function $\text{VNF}_i$ may be scaled up or down.

The contribution of the technical report is
\begin{itemize}
\item a mathematical model of the virtual resources supporting the
  packet flows in Figure~\ref{fig:NFV-VNF},
\item the set-up of an optimization problem for controlling the number
  of machines needed by each function in the service chain,
\item solution of the optimization-problem leading to a control-scheme
  of the number of machines needed to guarantee that the end-to-end
  deadline is met for incoming packets under a constant input flow.
\end{itemize}

\subsection*{Related works}
\label{sec:related}

There are a number of well known and established resource management
frameworks for data centers, but few of them explicitly address the
issue of latency.  Sparrow~\cite{Ousterhout:2013:SDL:2517349.2522716}
presents an approach for scheduling a large number of parallel jobs
with short deadlines. The problem domain is different compared to our
work in that we focus on sequential rather than parallel
jobs. Chronos~\cite{Kapoor:2012:CPL:2391229.2391238} focuses on
reducing latency on the communication
stack. RT-OpenStack~\cite{xi2015rt} adds real-time performance to
OpenStack by usage of a real-time hypervisor and a timing-aware
VM-to-host mapping.

The enforcement of an end-to-end (E2E) deadline of a sequence of jobs
to be executed through a sequence of computing elements was addressed
by several works, possibly under different terminologies. In the
holistic analysis~\cite{Tin94,Pal03,Pel07} the schedulability analysis
is performed locally. At global level the local response times are
transformed into jitter or offset constraints for the subsequent tasks.

A second approach to guarantee an E2E deadline is to split a
constraint into several local deadline constraints. While this
approach avoids the iteration of the analysis, it requires an
effective splitting method. Di Natale and Stankovic~\cite{DiN94}
proposed to split the E2E deadline proportionally to the local
computation time or to divide equally the slack time.  Later, Jiang
\cite{Jia06} used time slices to decouple the schedulability analysis
of each node, reducing the complexity of the analysis. Such an
approach improves the robustness of the schedule, and allows to
analyse each pipeline in isolation. Serreli et al.~\cite{Ser09,Ser10}
proposed to assign local deadlines to minimize a linear upper bound of
the resulting local demand bound functions. More recently, Hong et
al~\cite{Hon15} formulated the local deadline assignment problem as a
MILP with the goal of maximising the slack time.  After local
deadlines are assigned, the processor demand criterion was used to
analyze distributed real-time pipelines~\cite{Rah08,Ser10}.

In all the mentioned works, jobs have non-negligible execution
times. Hence, their delay is caused by the preemption experienced at
each function.  In our context, which is scheduling of virtual network
services, jobs are executed non-preemptively and in FIFO order. Hence,
the impact of the local computation onto the E2E delay of a request is
minor compared to the queueing delay.  This type of delay is
intensively investigated in the networking community in the broad area
\emph{queuing systems}~\cite{Kle75}.  In this area, Henriksson et
al.~\cite{Hen04} proposed a feedforward/feedback controller to adjust
the processing speed to match a given delay target.

Most of the works in queuing theory assumes a stochastic (usually
markovian) model of job arrivals and service times. A solid
contribution to the theory of deterministic queuing systems is due to
Baccelli et al.~\cite{Bac92}, Cruz~\cite{Cru91}, and Parekh \&
Gallager~\cite{Par93}. These results built the foundation for the
\emph{network calculus}~\cite{LeB01}, later applied to real-time
systems in the \emph{real-time calculus}~\cite{Cha05}.  The advantage
of network/real-time calculus is that, together with an analysis of
the E2E delays, the sizes of the queues are also modelled. As in the
cloud computing scenario the impact of the queue is very relevant
since that is part of the resource usage which we aim to minimize,
hence we follow this type of modeling.

\section{Problem formulation}
\label{sec:model}

To analyse the resource management problem described in
Section~\ref{sec:intro}, we model an abstract version of
Figure~\ref{fig:NFV-VNF} with the one shown in
Figure~\ref{fig:service-chain}. In our model we consider each VNF
simply as a \emph{function} that is processing requests. Within each
function there are a number of \emph{machines} running (which in
Section~\ref{sec:intro} would correspond to virtual machines).

\begin{figure}[htb]
  \centering
  \includegraphics[width=\columnwidth]{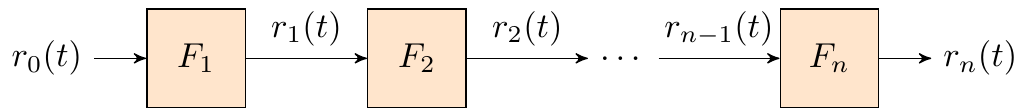}
  \caption{Illustration of the service-chain.}
  \label{fig:service-chain}
\end{figure}

\subsection{Input model}
\label{sec:input-model}

The service chain is composed by $n$ service functions. The $i$-th
function, denoted by $F_i$, receives requests at an \emph{incoming
  rate} $r_{i-1}(t)$. Then, the \emph{cumulative arrived requests} is
\begin{equation}
  R_{i-1}(t) = \int_0^t r_{i-1}(\tau)\, d\tau.
  \label{eq:intR}
\end{equation}

We model incoming requests and service speeds of each functions by a
fluid approximation.  In fact, in \cite{Zhang:aa} they used recent
advances in NFV-technology to process requests with a throughput of
about 10 million requests per second. We believe this to show that the
possible discretization error when using a fluid approximation is
indeed negligible.

Finally, each request needs to pass through the entire service-chain
within an end-to-end deadline, denoted $D^\mathsf{max}$.

\subsection{Service model}
\label{sec:service}

As illustrated in Figure~\ref{fig:service-chain-expanded}, the
incoming requests to function $F_i$ are stored in the queue and then
processed once it reaches the head of the queue. Here one should note
that due to the fluid approximation we made earlier, our analysis will
assume that a request is processed in parallel by all present machines
in the function. Again, with the requests entering at a rate of
millions per second along with them being very small we believe that
this is a good abstraction. At time $t$ there are $m_i(t)$ machines
ready to serve the requests, each with a \emph{nominal speed} of
$\ns{i}$ (note that this nominal speed might differ between different
functions in the service chain, i.e. it does not in general hold that
$\ns{i}=\ns{j}$ for $i\neq j$ ). The \emph{maximum speed} that
function $F_i$ can process requests at is thus $m_i(t)\ns{i}$. The
rate by which $F_i$ is processing requests is denoted $s_i(t)$. The
\emph{cumulative served requests} is defined as
\begin{align}
  \label{eq:cumServReq}
   S_i(t) = \int_0^t s_i(\tau)\, d\tau.
\end{align}

\begin{figure}[htb]
  \centering
  \includegraphics[width=\columnwidth]{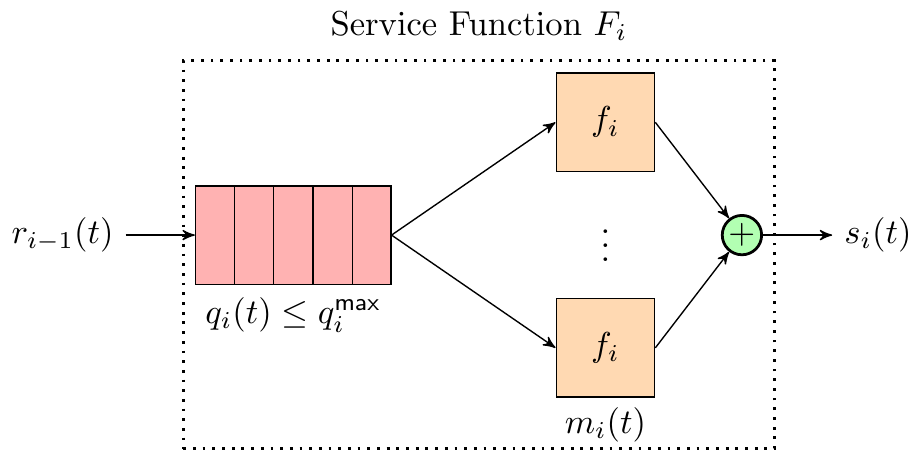}
  \caption{Illustration of the structure and different entities of the
    service chain.}
  \label{fig:service-chain-expanded}
\end{figure}

At time $t$ the number of requests stored in the queue is defined as
the \emph{queue length} $q_i(t)$:
\begin{equation}
  \label{eq:def-queue}
  q_i(t) = \int_0^t r_{i-1}(\tau) - s_i(\tau) d\tau = R_{i-1}(t) - S_i(t).
\end{equation}
Each function has a fixed \textit{maximum-queue capacity} $\qmax{i}$,
representing the largest number of requests that can be stored at the
function $F_i$. 

The \emph{queueing delay}, depends on the status of
the queue as well as on the service rate. We denote by
$\delay{i}{j}(t)$ the time taken by a request from when it enters
function $F_i$ to when it exits $F_j$, with $j\geq i$, where $t$ is
the time when the request exits function $F_j$:
\begin{equation*}
  \delay{i}{j}(t) = \inf\, \{\tau \geq 0 \,:\, R_{i-1}(t-\tau) \leq S_j(t)\}.
\end{equation*}
The \emph{maximum queueing delay} 
then is $\maxdelay{i}{j} = \max_{t\geq0} \delay{i}{j}(t)$.  The
requirement that a requests meets it end-to-end deadline is
$\maxdelay{1}{n}\leq D^\mathsf{max}$.

To control the queueing delay, it is necessary to control the service
rate of the function. Therefore, we assume that it is possible to
change the maximum service-rate of a function by changing the number
of machines that are on, i.e. changing $m_i(t)$. However, turning on a
machine takes $\Delta_i^\on$ time units, and turning off a machine
takes $\Delta_i^\off$ time units. Together they account for a
\emph{time delay}, $\Delta_i = \Delta_i^\on + \Delta_i^\off$,
associated with turning on/off a machine.

In the famous paper \cite{Kapoor:2012:CPL:2391229.2391238}, Google
profiled where the latency in a data center occurred. They showed that
less than 1\% (<1$\mu s$) of the latency occurred was due to the
propagation in the network fabric. The other 99\% ($\approx 85 \mu s$)
occurred somewhere in the kernel, the switches, the memory, or the
application. Since it is very difficult to say exactly which of this
99\% is due to processing, or queueing, we make the abstraction of
considering queueing delay and processing delay together, simply as
queueing delay. Hence, once a request has reached the head of the
queue and is processed it immediately exits the function and enters
the next function in the chain, or exit the chain if exiting the final
function. We thus assume that no request is lost in the communication
links, and that there is no propagation delay. Therefore, the
concatenation of the functions $F_1$ through $F_n$ implies that the
input of function $F_i$ is exactly the output of function $F_{i-1}$,
for $i=2,\ldots,n$, as illustrated in Figure~\ref{fig:service-chain}.

\subsection{Cost model}
\label{sec:costModel}

To be able to provide guarantees about the behaviour of the service
chain, it is necessary to make \emph{hard reservations} of the
resources needed by each function in the chain. This means that when a
certain resource is reserved, it is guaranteed to be available for
utilisation. Reserving this resource results in a cost, and due to the
hard reservation, the cost is not dependent on the actual utilisation,
but only on the resource reserved.

The \emph{computation cost} per time-unit per machine is denoted
$\jmath_i^\mathsf{c}$, and can be seen as the cost for the CPU-cycles
needed by one machine in $F_i$. This cost will also occur during the
time-delay $\Delta_i$. Without being too conservative, this time-delay
can be assumed to occur only when a machine is started. The
\emph{average computing cost} per time-unit for the whole function
$F_i$ is then
\begin{align}
  \label{eq:computeCost} 
  J_i^\mathsf{c}(m_i(t)) = \lim_{t \rightarrow \infty} \frac{\jmath_i^\mathsf{c}}{t} \int \limits_0^t m_i(s)
  +  \Delta_i\cdot(\partial_- m_i(s) )_+  ds
\end{align}
where $(x)_+ = \max(x,0)$, and $\partial_-m_i(t)$ is the left-limit of
$m_i(t)$:
\begin{equation*}
  \partial_- m_i(t) = \lim_{a \rightarrow t^-} \frac{m_i(t) -
  m_i(a)}{t - a},
\end{equation*}
that is, a sequence of Dirac's deltas at all points where the number of
machines changes.  This means that the value of the left-limit of
$m_i(t)$ is only adding to the computation-cost whenever it is
positive, i.e. when a machine is switched on.

The \emph{queue cost} per time-unit per space for a request is denoted
$\jmath_i^\mathsf{q}$ and can be seen as the cost for having a queue
with the capacity of one request. This cost come from the fact that
physical storage needs to be reserved so that a queue can be hosted on
it, normally this would correspond to the RAM of the network-card.
Reserving the capacity of $\qmax{i}$ would thus result in a cost per
time-unit of
\begin{align}
  \label{eq:bufferCost}
  J_i^\mathsf{q}(\qmax{i}) = \jmath_i^\mathsf{q} \qmax{i}.
\end{align}

\subsection{Problem definition}
\label{sec:optService}

The aim of this technical report is to control the number $m_i(t)$ of machines
running at stage $i$, such that the total average cost is minimized,
while the E2E constraint $D^\mathsf{max}$ is not violated and the
maximum queue sizes $\qmax{i}$ are not exceeded. This can be posed as
the following problem:
\begin{equation}
\begin{aligned}
  \text{minimize }  
  &  J = \sum_{i=1}^n J_i^\mathsf{c}(m_i(t))+J_i^\mathsf{q}(\qmax{i}) \\
  \text{subject to }
  &  \maxdelay{1}{n}\leq D^\mathsf{max} \\
  &  q_i(t) \leq \qmax{i}, \quad \forall t\geq0,\quad i=1,2,\ldots,n
\end{aligned}
\label{eq:optProblem}
\end{equation}
with $J_i^\mathsf{c}$ and $J_i^\mathsf{q}$ as
in~(\ref{eq:computeCost}) and~(\ref{eq:bufferCost}), respectively.  In
this technical report the optimization problem \eqref{eq:optProblem} will be
solved for a service-chain fed with a constant incoming rate $r$.

A valid lower bound $J^\mathsf{lb}$ to the cost achieved by any
feasible solution of~(\ref{eq:optProblem}) is found by assuming that
all functions are capable of providing exactly a service rate $r$
equal to the input rate. This is possible by running a fractional
number of machines $r/\ns{i}$ 
at function $F_i$.  In such an ideal case, buffers can be of zero size
($\forall i,\ \qmax{i}=0$), and there is no queueing delay
($\maxdelay{1}{n}=0$) since service and the arrival rates are the same
at all functions. Hence, the lower bound to the cost is
\begin{equation}
  \label{eq:costLowB}
  J^\mathsf{lb} = \sum_{i=1}^n\jmath_i^\mathsf{c}\frac{r}{\ns{i}}.
\end{equation}
Such a lower bound will be used to compare the quality of the several
solutions found later on.

In Section~\ref{sec:switch} we are going to make a general
consideration about the on/off scheme of each machine, in presence
of a constant input rate $r$. Later in Sections~\ref{sec:linear}
and~\ref{sec:periodic}, the optimal design problem
of~(\ref{eq:optProblem}) is solved, under a different set of
assumptions.

\section{Machine switching scheme}
\label{sec:switch}

In presence of an incoming flow of requests at a constant rate
$r_0(t)=r$, a number
\begin{equation}
  \mbar{i} = \floor{\frac{r}{\ns{i}}}
  \label{eq:mbarDef}
\end{equation}
of machines running the function $F_i$ must always stay on.  To match
the incoming rate $r$, in addition to the $\mbar{i}$ machines always
on, another machine must be on for some time in order to process a
request rate of $\ns{i}\rho_i$ where $\rho_i$ is the \emph{normalized
  residual request} rate:
\begin{equation}
  \label{eq:residual}
  \rho_i = \nicefrac{r}{\ns{i}} - \mbar{i},
\end{equation}
where $\rho_i\in[0,1)$.

In our scheme, the extra machine is switched on at a \emph{desired
  on-time} $\ton{i}$:
\begin{itemize}
\item $\off \rightarrow \on$: function $F_i$ switches on the
  additional machine when the time $t$ exceeds $\ton{i}$.
\end{itemize}
Since the additional machine does not need to always be on, it could
be switched off after some time.  The off-switching is also based on a
time-condition, the \emph{desired stop-time} $t_i^\off$,
i.e. the time-instance that the machine should be switched off, and is
given by:
\begin{equation*}
  t_i^\off = t_i^\on + T_i^\on.
\end{equation*}
where $T_i^\on$ is the duration that the machine should be on for, and
something that needs to be found.  The off-switching is then
triggered in the following way:
\begin{itemize}
\item $\on \rightarrow \off$: function $F_i$ switches off the
  additional machine when the time $t$ exceeds $t_i^\off$.
\end{itemize}

Note that this control-scheme, in addition with the constant input,
result in the extra machine being switched on/off
\emph{periodically}, with a period $T_i$. We thus assume that the
extra machine can process requests for a time $T_i^\on$ every period
$T_i$. The time during each period where the machine is not processing
any requests is denoted $T_i^\off=T_i-T_i^\on$. Notice, however, that
the actual time the extra machine is consuming power is
$T_i^\on+\Delta_i$ due to the time delay.

In the presence of a constant input, it is straight-forward to find
the necessary on-time during each period---in order for the additional
machine to provide the residual processing capacity of
$r-\mbar{i}\ns{i}$, its on-time $T_i^\on$ must be such that
\[
  T_i^\on\ns{i}=T_i(r-\mbar{i}\ns{i}),
\]
which implies
\begin{equation}
  \label{eq:TiON}
  T_i^\on = T_i\rho_i, \quad
  T_i^\off = T_i-T_i^\on
  = T_i( 1 - \rho_i).
\end{equation}

With each additional machine being switched on/off periodically,
it is also straightforward to find the computation cost for each
function. If $\mbar{i}+1$ machines are on for a time $T_i^\on$, and
only $\mbar{i}$ machines are on for a time $T_i^\off$, then the cost
$J_i^\mathsf{c}$ of~\eqref{eq:computeCost} becomes
\begin{equation}
  \label{eq:costCperiod}
  J_i^\mathsf{c} = \jmath_i^\mathsf{c}\left(\frac{T_i^\mathsf{on}+\Delta_i}{T_i}+\mbar{i}\!\right)
  = \jmath_i^\mathsf{c}\left(\!\mbar{i}+\rho_i+\frac{\Delta_i}{T_i}\right)
\end{equation}
if $T_i^\off\geq \Delta_i$. If instead $T_i^\off<\Delta_i$, that is if
\begin{equation}
  \label{eq:condToffSmall}
  T_i < \overline{T}_i := \frac{\Delta_i}{1-\rho_i},
\end{equation}
then there is no time to switch the additional machine off and then on
again. Hence, we keep the last machine on, even if it is not
processing packets, and the computing cost becomes
\begin{equation}
  \label{eq:castCsmallToff}
  J_i^\mathsf{c} = \jmath_i^\mathsf{c}\left(\!\mbar{i}+\rho_i+\frac{T_i^\off}{T_i}\right)
  = \jmath_i^\mathsf{c}(\mbar{i}+1).
\end{equation}

Next, using this control-scheme, the optimization problem of
\eqref{eq:optProblem} will be studied under two different set of
assumptions. In Section~\ref{sec:linear}, we will approximate the service
functions with linear lower-bounds, which allows us to find a period
$T_i$ of each function. Note that the lower-bound approximation incurs
in some pessimism in the solution. In Section~\ref{sec:periodic} we will
assume that every function will switch on/off its additional machine
with the same period, $T$. For this case we will derive the optimal
period $T$.

\section{Linear approximation of service}
\label{sec:linear}

In this section, the service functions are approximated by linear
lower-bounds. This choice allows us finding an explicit solution to
the switching periods $T_i$ of each function. Inevitably, the solution
incurs in some pessimism due to the approximation.

If the cumulative served requests~\eqref{eq:cumServReq} is
lower-bounded by a linear function, as illustrated in
Figure~\ref{fig:linear-approximation}, the maximum size of the
queue at function $F_i$ is attained exactly when the function switches on its extra machine,
$\qmax{i}=q_i(\ton{i})$:
\begin{equation}
  \label{eq:maxQi}
  \qmax{i}  = (r - \mbar{i}\ns{i}) T_i^\off
  = \ns{i}T_i\rho_i(1 - \rho_i),
\end{equation}
while the maximum introduced delay is
\begin{equation*}
  \maxdelay{i}{i}  = \frac{\ns{i}}{r} T_i\rho_i(1 -\rho_i) = \frac{q_i(\ton{i})}{r}.
\end{equation*}

\begin{figure}[ht]
  \centering
  \includegraphics[scale=1]{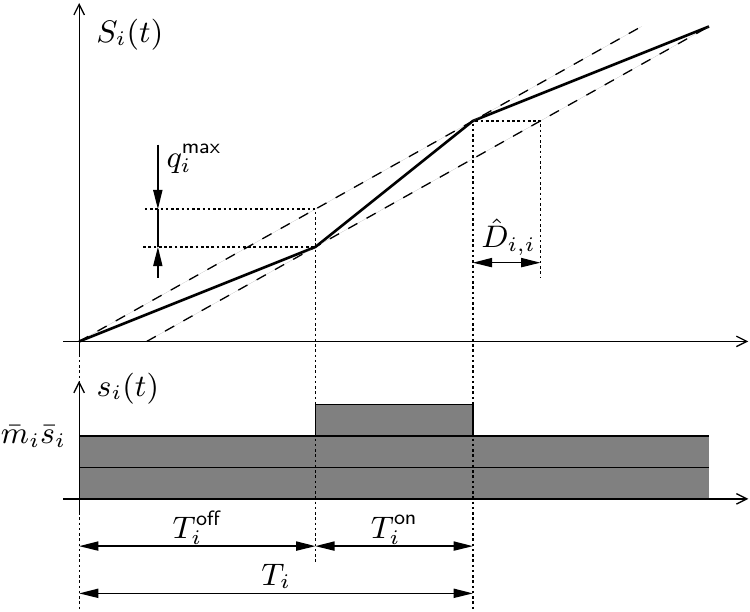}
  \caption{Linear approximation to the cumulative
    served requests.}
  \label{fig:linear-approximation}
\end{figure}

By setting the variable $x_i$ and constants $a_i$, $b_i$, and $c$ as
\begin{equation}
  \left\{
  \begin{aligned}
    x_i & = \maxdelay{i}{i} = \frac{\ns{i}}{r} T_i\rho_i(1 -\rho_i), &
    a_i & = \jmath_i^\mathsf{q}r,\\
    b_i & = \jmath_i^\mathsf{c}\Delta_i\frac{\ns{i}}{r} \rho_i(1 -\rho_i), &
    c & = D^\mathsf{max},
  \end{aligned}
  \right.
  \label{eq:setABCDXlin}
\end{equation}
the optimal design problem of~\eqref{eq:optProblem} can be formulated
as
\begin{align}
  \text{minimize } & J = \sum_{i=1}^n \Big(a_ix_i+b_i\frac{1}{x_i}\Big) +J^\mathsf{lb} \notag\\
  \text{such that } & \sum_{i=1}^n x_i \leq c
  \label{eq:constrSumLeqC}\\
  & x_i \geq 0 \notag
\end{align}
with $J^\mathsf{lb}$ being the cost lower bound as
in~(\ref{eq:costLowB}).
First, we check the unconstrained solution, which is
\begin{equation}
  \partder{J}{x_i} = 0 \quad\Rightarrow\quad
  x_i = \sqrt{\frac{b_i}{a_i}}
  \label{eq:solUncontr}
\end{equation}
If constraint \eqref{eq:constrSumLeqC} holds at the solution
of~(\ref{eq:solUncontr}), the optimum is unconstrained and the
corresponding optimal cost is
\begin{equation*}
  J' = 2\sum_{i=1}^n \sqrt{a_ib_i}+J^\mathsf{lb}.
\end{equation*}
Otherwise, the constraint \eqref{eq:constrSumLeqC} must be explicitly
enforced. In this case the solution is found via Lagrange
multiplier. Let $\lambda$ be the multiplier of the
constraint~(\ref{eq:constrSumLeqC}), then the solution is
\begin{equation}
  \label{eq:solConstr}
  x_i=\sqrt{\frac{b_i}{a_i+\lambda}}
\end{equation}
with cost
\begin{equation}
  \label{eq:costConstr}
  J'' = \sum_{i=1}^n \sqrt{a_ib_i}\left(\!\sqrt{\frac{a_i}{a_i+\lambda}}
    +\sqrt{\frac{a_i+\lambda}{a_i}}\right) +J^\mathsf{lb}\geq J'
\end{equation}
with the multiplier $\lambda$ being the unique positive solution of
\begin{equation}
  \label{eq:lambdaConstr}
  \sum_{i=1}^n\sqrt{\frac{b_i}{a_i+\lambda}} - c = 0
\end{equation}
Finally, the switching-period $T_i$ is given by
\[
T_i = \frac{r}{\ns{i}\rho_i(1-\rho_i)}x_i
\]
and the maximum queue-size $\qmax{i}$ are given by
Eq.\eqref{eq:maxQi}. Notice that, for all $i$ such that
Eq.~(\ref{eq:condToffSmall}) holds true, then there is physically no
time to switch the additional machine off and then on again
($T_i^\off < \Delta_i$). For all these machines the cost is computed
as $\mbar{i}+1$ machines are always on (as in
Eq.~(\ref{eq:castCsmallToff})) and not by~(\ref{eq:costCperiod}).

\paragraph{Example}
Let us apply the described design methodology to a simple example of a
service chain with two functions.  We assume an incoming rate $r=17$
of requests per second with an E2E-deadline of
$D^{\mathsf{max}} = 0.02$. The parameters of the functions are
reported in Table~\ref{tab:exampleParam}.
\begin{table}[tb]
  \centering
    \begin{tabular}{c|ccccc}
    \hline
    \hline
    $i$ & $\ns{i}$ & $\jmath_i^\mathsf{c}$ & $\jmath_i^\mathsf{q}$ & $\Delta_i$ \\
    \hline
    $1$ & $6$ & $6$ & $0.5$ & $0.01$ \\
    $2$ & $8$ & $8$ & $0.5$ & $0.01$ \\
    \hline
    \hline
  \end{tabular}
  \caption{Parameters of the example.}
  \label{tab:exampleParam}
\end{table}
From~(\ref{eq:mbarDef}) and~(\ref{eq:residual}), it follows that
$\mbar{1} = \mbar{2} = 2$, and $\rho_1=\frac{5}{6}$,
$\rho_2=\frac{1}{8}$, implying that both functions must always keep
two machines on, and then switch a third one on/off periodically.

From~(\ref{eq:setABCDXlin}), the parameters needed to formulate the
optimization problem of~(\ref{eq:constrSumLeqC}) are: $a_1 = 8.5$,
$a_2 = 8.5$, $b_1 = 2.94\times 10^{-3}$, $b_2 = 4.12\times 10^{-3}$,
and $c=0.02$. Also, from~(\ref{eq:costLowB}) the cost lower bound is
$J^\mathsf{lb}=34$.

The unconstrained solution of~(\ref{eq:solUncontr}) is then given by
$x_1 
= 18.6\times 10^{-3}$ and $x_2 
= 22.0\times 10^{-3}$.
Such a solution, however, violates E2E deadline constraint since
\[
x_1+x_2 = 40.6 \times 10^{-3} > c = 0.02.
\]
Therefore, the constrained solution must be explored.

When solving the constrained solution, the Lagrange multiplier
$\lambda = 26.6$ is the solution of (\ref{eq:lambdaConstr}).
From~(\ref{eq:solConstr}), this gives the solution of
$x_1 = 9.16\times 10^{-3}$, and $x_2 = 10.8\times 10^{-3}$, resulting
in the periods $T_1 = 186.8\times 10^{-3}$ and
$T_2 = 210.6\times 10^{-3}$. Note that the off-time for the two
functions are $T_1^\off = 31.1\times 10^{-3}$ and
$T_2^\off = 184.2\times 10^{-3}$, which are both larger than
$\Delta_i = 0.01$. Note that the E2E-delay for this solution is
exactly the E2E-deadline. Finally, from Eq.~(\ref{eq:maxQi}) we find
that the maximum queue-sizes for this solution are
$\qmax{1} = 155.7\times 10^{-3}$ and $\qmax{2} = 184.3\times 10^{-3}$.
Finally, from~(\ref{eq:costConstr}) the cost for the solution is
$J'' = 34.871$. It should be noted that this example is meant to
illustrate how one can use the design methodology of this section in
order to find the periods $T_1$ and $T_2$ as well as the maximum
queue-sizes $\qmax{1}$ and $\qmax{2}$. In a real setting the incoming
traffic will likely be around million requests per second,
\cite{Zhang:aa}.

\section{Design of  machine-switching period}
\label{sec:periodic}

In the previous section, the service functions were approximated by a
linear lower-bound, which allowed us to find a period $T_i$ for each
function. However, such an approximation leads to an extra cost. In
this section, the exact expression of the service functions will be
considered. Since the exactness of the service functions leads to an
increases in the complexity, the design problem
of~\eqref{eq:optProblem} will be solved while letting every function
switch its additional machine on/off with the same period, $T_i=T$.

The common period $T$ of the schedule, by which every function
switches its additional machine on/off, is the only design variable in
the optimization problem~\eqref{eq:optProblem}. As proved later in
Lemma~\ref{lem:qimax} and Lemma~\ref{lem:E2Edelay}, the maximum queue
size $\qmax{i}$ of any function $F_i$ and the E2E delay
$\maxdelay{1}{n}$ are both proportional to the switching period $T$.
The intuition behind this fact is that the longer the period $T$ is,
the longer a function will have to wait with the additional machine
being off, before turning it on again.  During this interval of time,
each function is accumulating work and consequently both the maximum
queue size and the delay grows with $T$.

With these hypothesis, the cost function of the optimization
problem~(\ref{eq:optProblem}) becomes
\begin{equation}
  \label{eq:costPeriodFunT}
  J(T) = aT+\sum_{i:T<\overline{T}_i}\jmath_i^\mathsf{c}(1-\rho_i)+
  \sum_{i:T\geq \overline{T}_i}\jmath_i^\mathsf{c}\frac{\Delta_i}{T}+
  J^\mathsf{lb},
\end{equation}
where $J^\mathsf{lb}$ is the lower bound given by~(\ref{eq:costLowB})
and $a = \sum_{i=1}^nj_i^\mathsf{q} \alpha_i$, where $\alpha_i$ is
given by Lemma~\ref{lem:qimax}. Furthermore, $\overline{T}_i$ (defined
in~(\ref{eq:condToffSmall})) represents the value of the period below
which it is not feasible to switch the additional machine off and then
on again ($T<\overline{T}_i\ \Leftrightarrow\ T_i^\off<\Delta_i$). In
fact, $\forall i$ with $T<\overline{T}_i$ we pay the full cost of
having $\mbar{i}+1$ machines always on.

The deadline constraint in~(\ref{eq:optProblem}), can be simply
written as
\[
  T\leq c := \frac{D^\mathsf{max}}{\sum_{i=1}^n\delta_i},
\]
with $\delta_i$ opportune constants, given in
Lemma~\ref{lem:E2Edelay}.

The cost $J(T)$ of~(\ref{eq:costPeriodFunT}) is a continuous function
of one variable $T$. It has to be minimized over the closed interval
$[0,c]$. Hence, by the Weiersta\ss's extreme-value theorem, it has a
minimum. To find this minimum, we just check all (finite) points at
which the cost is not differentiable and the ones where the derivative
is equal to zero. Let us define all points in $[0,c]$ in which $J(T)$
is not differentiable:
\begin{equation}
  \mathcal{C}=\{\overline{T}_i:\overline{T}_i<c\} \cup \{0\} \cup \{c\}.
  \label{eq:defC}
\end{equation}
We denote by $p=|\mathcal{C}|\leq n+2$ the number of points in
$\mathcal{C}$. Also, we denote by $c_k\in\mathcal{C}$ the points in
$\mathcal{C}$ and we assume they are ordered increasingly
$c_1<c_2<\ldots<c_p$.  Since the cost $J(T)$ is differentiable over
the open interval $(c_k,c_{k+1})$, the minimum may also occur at an
interior point of $(c_k,c_{k+1})$ with derivative equal to zero. Let
us denote by $\mathcal{C}^*$ the set of all interior points of
$(c_k,c_{k+1})$ with derivative of $J(T)$ equal to zero, that is
\begin{equation}
  \mathcal{C}^* = \{c_k^*: k=1,\ldots,p-1,\ c_k< c_k^*< c_{k+1}\}
  \label{eq:defCstar}
\end{equation}
with
\[
c_k^* = \sqrt{\frac{\sum_{i:\overline{T}_i<c_{k+1}}\jmath_i^\mathsf{c}\Delta_i}{a}}.
\]
Then, the optimal period is given by
\begin{equation}
  \label{eq:optPeriodSameT}
  T^* = \arg\min_{T\in\mathcal{C}\cup\mathcal{C}^*}\{J(T)\}.
\end{equation}

Next, we illustrate an example of solution of the design problem.
Later, Lemma~\ref{lem:qimax} and Lemma~\ref{lem:E2Edelay} provide the
expression of maximum queue size $\qmax{i}$ and the E2E delay
$\maxdelay{1}{n}$, as function of the switching period $T$.

\paragraph{Example}
As in Section~\ref{sec:linear}, we use an example to illustrate the
solution of the optimization problem of a service chain containing two
functions. The input to the service-chain has a rate of $r_0(t)=r=17$.
Every request has an E2E-deadline of $D^\mathsf{max} = 0.02$. The
parameters of the two functions are reported in
Table~\ref{tab:exampleParam}.

The input $r_0(t) = r$ can be seen as dummy function $F_0$ preceding
$F_1$, with $\ns{0} = r$, $\mbar{0}=1$, and $\rho_0 = 0$ (from
Equations~\eqref{eq:mbarDef}--\eqref{eq:residual}). Also, as in the
example of Section~\ref{sec:linear}, $\mbar{1} = \mbar{2} = 2$,
$\rho_1 = 0.833$, and $\rho_2 = 0.125$. This in turn leads to
$\overline{T}_1 = 60.0\times10^{-3}$ and
$\overline{T}_2 = 11.4\times10^{-3}$, where $\overline{T}_i$ is the
threshold period for function $F_i$, as defined
in~\eqref{eq:condToffSmall}. From Lemma~\ref{lem:qimax} it follows
that the parameter $a$ of the cost function \eqref{eq:costPeriodFunT}
is $a =
0.792$,
while from Lemma~\ref{lem:E2Edelay} the parameters $\delta_i$
determining the queuing delay introduced by each function, are
$\delta_1 = 49.0\times10^{-3}$ and $\delta_2 = 22.1\times10^{-3}$,
which in turn leads to
\[
  c = \frac{D^{\max}}{\delta_1 + \delta_2} 
= \frac{0.02}{71.1\times 10^{-3}} = 281\times 10^{-3}.
\]

Since $\overline{T}_2 < \overline{T}_1 < c$, the set $\mathcal{C}$
of~(\ref{eq:defC}) containing the boundary is
\[
\mathcal{C}=\{0,\underbrace{0.00114}_{\overline{T}_2},\underbrace{0.060}_{\overline{T}_1}, \underbrace{0.281}_c\}.
\]
To compute the set $\mathcal{C}^*$ of interior points with derivative
equal to zero defined in (\ref{eq:defCstar}), which is needed to
compute the period with minimum cost from~(\ref{eq:optPeriodSameT}),
we must check all intervals with boundaries at two consecutive points
in $\mathcal{C}$. In the interval $(0,\overline{T}_2)$ the derivative
of $J$ is never zero. When checking the interval $(\overline{T}_2,
\overline{T}_1)$, the derivative is zero at
\[
  c_1^* = \sqrt{\frac{\jmath_2^\mathsf{c}\Delta_2}{a}} = 0.318,
\]
which, however, falls outside the interval. Finally, when checking the
interval $(\overline{T}_1, c)$ the derivative is zero at
\[
c_2^* = \sqrt{\frac{\jmath_1^\mathsf{c}\Delta_1 + \jmath_2^\mathsf{c}\Delta_2}{a}} = 0.421>c=0.281.
\]
Hence, the set of points with derivative equal to zero is
$\mathcal{C}^*=\emptyset$. By inspecting the cost at points in
$\mathcal{C}$ we find that the minimum occurs at $T^*=c=0.281$, with
cost $J(T^*)=34.7$.  It should noted that this solution provides a
lower cost than the one found by the linear approximation (in
Section~\ref{sec:linear}), that is $34.871$. This, however, is not
true in general.

To conclude the example we show in
Figure~\ref{fig:example-periodic-state-space} the state-space
trajectory for the two queues. There one can see how the two queues
grows and shrinks depending on which of the two functions has their
additional machine on. Again, it should be noted that this example is
meant to illustrate how one can use the design methodology of this
section in order to find the best period $T$. In a real setting the
incoming traffic will likely be around million requests per second,
\cite{Zhang:aa}.
\begin{figure}[ht]
  \centering
  \includegraphics[width=\columnwidth]{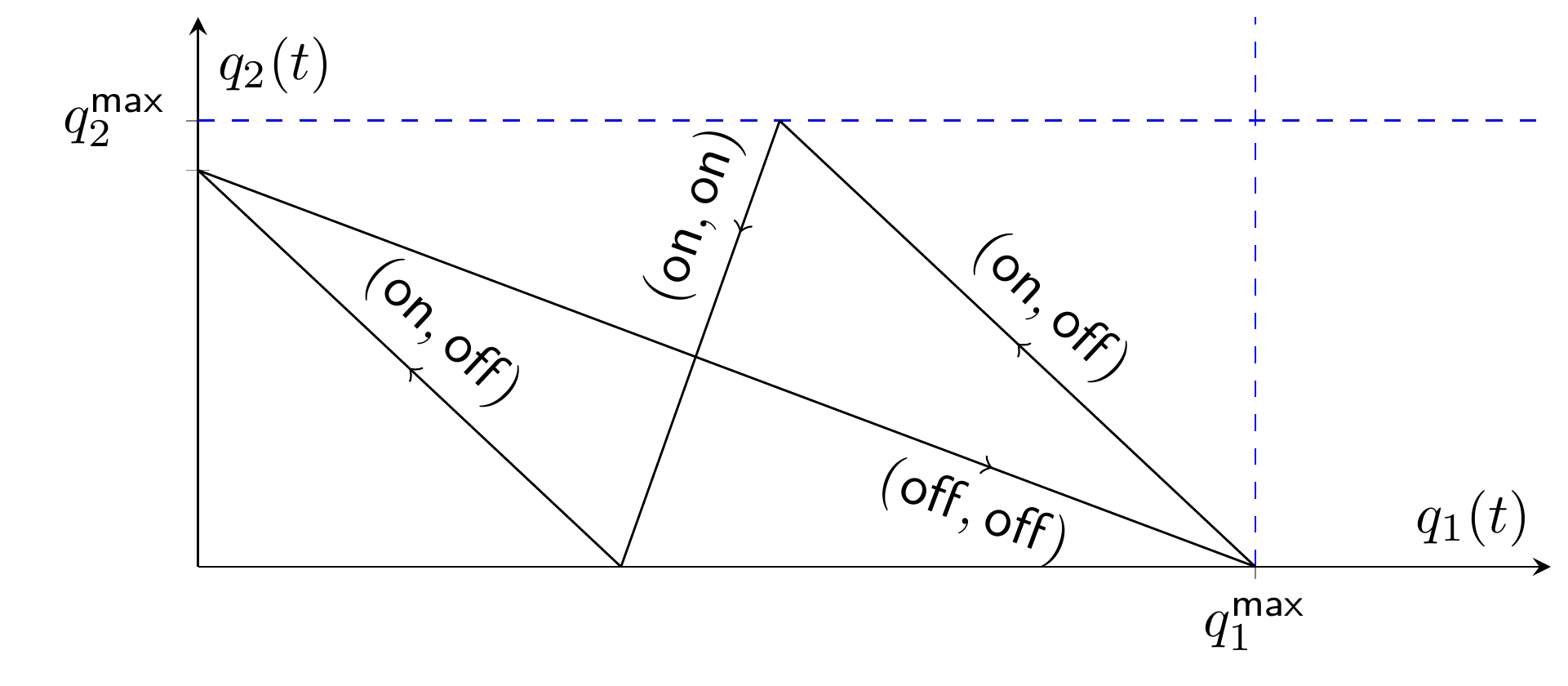}
  \caption{\label{fig:example-periodic-state-space} State-space
    trajectory for the example in
    Section~\ref{sec:periodic}. $(\on,\,\off)$ correspond to $F_1$
    having its additional machine on, while $F_2$ has its extra
    machine off.}
\end{figure}

Next we derive the expression of the maximum queue size $\qmax{i}$ as
function of the switching period $T$.
\begin{lemma}
  The maximum queue size $\qmax{i}$ at function $F_i$ is
  \begin{equation}
    \label{eq:qMaxPeriodic}
    \qmax{i} = T \times \alpha_i,
  \end{equation}
  where
  \begin{multline*}
    \alpha_i = \max \big\{
    \rho_i\big(\ns{i}(1-\rho_i) - \ns{i-1}(1-\rho_{i-1})\big),\\
    (1-\rho_{i-1})(\ns{i-1}\rho_{i-1} - \ns{i}\rho_{i}),\\
    \rho_{i-1}\big(\ns{i-1}(1-\rho_{i-1}) - \ns{i}(1-\rho_i)\big),\\
    (1-\rho_i)(\ns{i}\rho_{i} - \ns{i-1}\rho_{i-1})
    \big\},
  \end{multline*}
  with $\rho_i$ as defined in~(\ref{eq:residual}), and $T$ being the
  period of the switching scheme, common to all functions.
  \label{lem:qimax}
\end{lemma}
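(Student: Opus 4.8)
The plan is to pass to the periodic steady state and reduce $\qmax{i}=\max_t q_i(t)$ to the oscillation $\max_t g-\min_t g$ of an explicit piecewise-linear $T$-periodic function $g$, which is then evaluated by a short finite case analysis. The first step is to pin down the shape of the arrival process $R_{i-1}=S_{i-1}$ feeding $F_i$. I claim that in steady state the cumulative output of $F_{i-1}$ grows at rate $\ns{i-1}(\mbar{i-1}+1)$ on one contiguous sub-interval of length $T\rho_{i-1}$ of each period and at rate $\ns{i-1}\mbar{i-1}$ on the complement (for the dummy source $F_0$ this degenerates to the constant rate $r$, i.e.\ $\rho_0=0$). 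This I would prove by induction on $i$. For $F_1$ the input is the constant $R_0(t)=rt$, so $q_1$ is a sawtooth whose rise over $F_1$'s off-interval, $T(1-\rho_1)(r-\ns1\mbar1)$, exactly equals its maximal fall over $F_1$'s on-interval, $T\rho_1(\ns1(\mbar1+1)-r)$; hence $q_1>0$ except at isolated instants, $F_1$ is busy almost everywhere, and its output is the stated two-rate signal. For the inductive step I would use the phase chosen in the third step below to show that $q_i$ again has a single zero per period and $F_i$ is busy a.e., so that its output is again a two-rate signal, now with $F_i$'s parameters. Consequently both the arrival rate into $F_i$ and the service rate of $F_i$ while busy are $T$-periodic two-level signals of mean $r$.

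Next I would reduce the problem to a pair of tent functions. Writing $E_{i-1}(t)=R_{i-1}(t)-rt-R_{i-1}(0)$ and $E^{S}_{i}(t)=S_i(t)-rt-S_i(0)$, each is a $T$-periodic tent that rises on the relevant on-interval with slope $\ns{i-1}(1-\rho_{i-1})$ (resp.\ $\ns{i}(1-\rho_i)$) and falls on the complement with slope $\ns{i-1}\rho_{i-1}$ (resp.\ $\ns{i}\rho_i$), of amplitude $T\ns{i-1}\rho_{i-1}(1-\rho_{i-1})$ (resp.\ $T\ns{i}\rho_{i}(1-\rho_i)$). Since $q_i(t)=q_i(0)+E_{i-1}(t)-E^{S}_{i}(t)$, over one period $\max_t q_i-\min_t q_i$ is the oscillation of $g:=E_{i-1}-E^{S}_{i}$; as the buffer to be reserved is $\sup_t q_i(t)$ and the steady-state offset $q_i(0)$ is free to be set so that $\min_t q_i=0$, this gives $\qmax{i}=\max_t g-\min_t g$. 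The only remaining freedom is the phase offset $\ton{i}-\ton{i-1}$ between the two on-intervals, a design parameter, which I would choose to minimize this oscillation; that minimum will turn out to be the asserted $T\alpha_i$.

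The core of the argument is this phase optimization. The function $g$ is piecewise linear with at most four breakpoints per period, and on its pieces the slope is one of $\ns{i-1}(1-\rho_{i-1})-\ns{i}(1-\rho_i)$, $\ns{i-1}(1-\rho_{i-1})+\ns{i}\rho_i>0$, $-\ns{i-1}\rho_{i-1}-\ns{i}(1-\rho_i)<0$, $\ns{i}\rho_i-\ns{i-1}\rho_{i-1}$. A ``mismatched'' region --- where $F_{i-1}$ is on while $F_i$ is off (if $\rho_{i-1}\ge\rho_i$) or the reverse --- has constant nonzero slope of fixed sign, and making it longer only increases $\max_t g-\min_t g$; hence the optimal phase nests the shorter on-interval inside the longer one, which shrinks the total mismatched length to $T|\rho_{i-1}-\rho_i|$ and leaves one ``both-on'' block of length $T\min(\rho_{i-1},\rho_i)$ and one ``both-off'' block of length $T(1-\max(\rho_{i-1},\rho_i))$. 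Since the slopes average to zero the two blocks cannot both be rising; splitting the mismatched length evenly between its two sub-pieces then makes $\max_t g-\min_t g$ equal to the larger of the excursions over the two blocks --- i.e.\ $T$ times the larger of $\rho_i(\ns{i}(1-\rho_i)-\ns{i-1}(1-\rho_{i-1}))$ and $(1-\rho_{i-1})(\ns{i-1}\rho_{i-1}-\ns{i}\rho_{i})$ when $\rho_i\le\rho_{i-1}$, or of $\rho_{i-1}(\ns{i-1}(1-\rho_{i-1})-\ns{i}(1-\rho_i))$ and $(1-\rho_i)(\ns{i}\rho_{i}-\ns{i-1}\rho_{i-1})$ when $\rho_{i-1}\le\rho_i$. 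A one-line inequality in each sub-case (using $\rho_j\ns{j}(1-\rho_j)=(1-\rho_j)\ns{j}\rho_j$) shows that the two terms from the ``wrong'' regime are negative or dominated, so the four quantities collapse into the single outer maximum in the statement; checking the degenerate cases $\rho_i=\rho_{i-1}$, $\rho_i=0$ and $i=1$ against this formula finishes the proof.

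The main obstacle is exactly this last step: several qualitatively different alignments of the two on-intervals occur (shorter interval nested on either side, and the sub-cases according to which block rises), and in each one must identify the maximal rising excursion of $g$ as a function of the phase, argue that the worst mismatched region is unavoidable, check that splitting it evenly is optimal, and verify which of the four candidate expressions realizes the minimum. A secondary, more technical point is the inductive claim that under the chosen phase $F_i$ stays busy almost everywhere, so that $F_{i+1}$ still sees the clean two-rate arrival curve --- this is what makes $\alpha_i$ depend only on $F_{i-1}$ and $F_i$, and it has to be verified jointly with, not prior to, the phase choice of the third step.
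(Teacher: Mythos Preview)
Your reduction to the oscillation of $g=E_{i-1}-E_i^S$ and the idea of optimising the phase are correct and close in spirit to the paper's argument, but your case split and your justification of the optimum differ. The paper partitions into four cases according to the \emph{signs} of the both-on slope $\ns{i-1}(1-\rho_{i-1})-\ns{i}(1-\rho_i)$ and the both-off slope $\ns{i}\rho_i-\ns{i-1}\rho_{i-1}$ (equivalently, by comparing $(\mbar{j}+1)\ns{j}$ and $\mbar{j}\ns{j}$ across $j=i-1,i$). In each of the four cases it exhibits an explicit optimal phase $\ton{i}-\ton{i-1}$ --- left-edge alignment $\ton{i}=\ton{i-1}$ in one case, right-edge alignment $\toff{i}=\toff{i-1}$ in another, and a specific non-centred offset chosen so that $q_i$ hits zero at a corner in the remaining two --- and then reads off the local peak(s) of $q_i$. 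The four terms in the statement are exactly these per-case peak values; the outer $\max$ merely fuses the cases into one formula.

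Your two-case split by $\rho_i\lessgtr\rho_{i-1}$ is coarser, and the sentence ``since the slopes average to zero the two blocks cannot both be rising'' is false as stated: when $\ns{i}(1-\rho_i)<\ns{i-1}(1-\rho_{i-1})$ and $\ns{i}\rho_i>\ns{i-1}\rho_{i-1}$ (the paper's Case~2b) \emph{both} the both-on and both-off blocks have positive slope and it is the nested mismatch interval that falls. Your ``even split gives the larger of the two block excursions'' was argued only for the opposite pattern (both blocks falling, mismatch rising); it still produces the correct value in Case~2b by mirror symmetry, and in the two mixed-sign cases (paper's 1b and 2a) the oscillation is actually independent of the split and equals the single falling block's excursion --- your $\max$ happens to select it because the other candidate is negative --- but none of this is covered by the reasoning you wrote down. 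So the formula you reach is right, but the argument needs the same four-way sign analysis the paper performs. A smaller point: the paper's specific (non-centred) phase offsets are quoted and reused verbatim in the proof of the subsequent end-to-end delay lemma, so your even-split phase, while also a minimiser of $\qmax{i}$, would not plug into the remainder of the paper.
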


\begin{proof}
The queue size over time $q_i(t)$ is a continuous, piecewise-linear
function, since both the input and the service rates are piecewise
constant, and the queue size is defined by
Eq.~\eqref{eq:def-queue}. Hence, if at $t^*$ the function $q_i(t)$
takes its maximum value, it must necessarily happen that $\partial
q_i(t)/\partial t\geq 0$ in a left-neighbourhood of $t^*$ and $\partial
q_i(t)/\partial t\leq 0$ in a right-neighbourhood of $t^*$.

To find the value of $\partial q_i(t) /\partial t$, one needs to
distinguish among the four possible cases, Case~(1a), Case~(1b),
Case~(2a), and Case~(2b), depending on the nominal speeds $\ns{i-1}$
and $\ns{i}$, as is shown in Table~\ref{tab:cases}. These cases, in
turn, determine the sign of $\partial q_i(t)/\partial t$, as
summarised in Table~\ref{tab:vectors}. Note that for $F_i = F_1$, one
should consider the input as $F_{i-1} = F_0$ with $\ns{0} = r$,
leading to $\mbar{0} = 1$ and $\rho_0 = 0$, which would then belong to
Case~(2b).

\begin{table}[ht]
  \centering
  \begin{tabular}[t]{c|c|c}
    \hline
    \hline
    Case~(1a) & $(\mbar{i}+1)\ns{i} \geq (\mbar{i-1}+1)\ns{i-1}$ &
    \multirow{2}{*}{$\mbar{i}\ns{i} \geq \mbar{i-1}\ns{i-1}$} \\
    Case~(1b) & $(\mbar{i}+1)\ns{i} < (\mbar{i-1}+1)\ns{i-1}$ & \\ 
    \hline
    Case~(2a) & $(\mbar{i}+1)\ns{i} \geq (\mbar{i-1}+1)\ns{i-1}$ &
    \multirow{2}{*}{$\mbar{i}\ns{i} < \mbar{i-1}\ns{i-1}$} \\
    Case~(2b) & $(\mbar{i}+1)\ns{i} < (\mbar{i-1}+1)\ns{i-1}$ & \\ 
    \hline
    \hline
  \end{tabular}
  \caption{The four possible cases that one needs to distinguish
    among. Each case is a function of the nominal speeds $\ns{i}$ and
    $\ns{i-1}$.}
  \label{tab:cases}
\end{table}

\begin{table}[ht]
  \centering
  \begin{tabular}[t]{c|cccc}
    \hline
    \hline
    $m_{i-1}(t)$ & $\mbar{i-1}$ & $\mbar{i-1}+1$ & $\mbar{i-1}+1$ & $\mbar{i-1}$ \\
    $m_i(t)$ & $\mbar{i}$ & $\mbar{i}$ & $\mbar{i}+1$ & $\mbar{i}+1$ \\
    \hline
    Case (1a) & $\leq 0$ & $\geq 0$ & $\leq 0$ & $\leq 0$ \\
    Case (1b) & $\leq 0$ & $\geq 0$ & $> 0$ & $\leq 0$ \\
    Case (2a) & $ > 0$ & $\geq 0$ & $\leq 0$ & $\leq 0$ \\
    Case (2b) & $ > 0$ & $\geq 0$ & $> 0$ & $\leq 0$ \\
    \hline
    \hline
  \end{tabular}
  \caption{Sign of $\partial q_i(t)/ \partial t$ as function of the
    number of on-machines within $F_{i-1}$ and $F_i$.}
  \label{tab:vectors}
\end{table}

Next, the maximum queue-size $\qmax{i}$ will be derived for each case.
We will also derive the best time for each function to start its
additional machine, i.e. $\ton{i}$.

\paragraph{Case (1a)}
For this case, illustrated in Figure~\ref{fig:case1a}, the sign of
$\partial q_i(t)/\partial t$ shown in Table~\ref{tab:vectors}, implies
that $q_i(t)$ grows only when $m_i(t)=\mbar{i}$ and
$m_{i-1}(t) = \mbar{i-1}+1$.  From this condition, the $i$-th queue
can start to decrease either when $m_i(t)\to\mbar{i}+1$ or
$m_{i-1}(t)\to\mbar{i-1}$. In the first case, the rate of decrease is
\begin{align*}
  -\partial q_i(t)/\partial t &= \big((\mbar{i}+1)\ns{i} - (\mbar{i-1}+1)\ns{i-1}\big)\\
  &= \big(\ns{i}(1-\rho_i) - \ns{i-1}(1-\rho_{i-1})\big),
\end{align*}
and such a state lasts for $T_i^\on$ (during the interval of length
$T_i^\on$ in Figure~\ref{fig:case1a}). This therefore yields a local
maximum of:
\begin{align}
  \label{eq:1a-qon}
  q_i(\ton{i}) &=  
                 T \rho_i\big(\ns{i}(1-\rho_i) - \ns{i-1}(1-\rho_{i-1})\big).
\end{align}
It is easy to verify that changing the on-time $\ton{i}$ to instead be
later will yield a larger local maximum, and changing it to instead be
earlier will yield a negative queue size. The given $\ton{i}$ is thus
the optimal one, and can be expressed relative to $\ton{i-1}$ as:
\begin{align}
  \label{eq:1a-ton}
  \ton{i} = \ton{i-1} + T\rho_i\frac{\ns{i}(1-\rho_i) - \ns{i-1}(1-\rho_{i-1})}{\ns{i-1}(1-\rho_{i-1}) + \ns{i}\rho_i}
\end{align}

On the other hand, the local maximum when $m_{i-1}(t)\to\mbar{i-1}$ is
determined by the interval of length $T_{i-1}^\off$, as shown in
Figure~\ref{fig:case1a}, that is
\[
T_{i-1}^\off(\mbar{i}\ns{i} - \mbar{i-1}\ns{i-1}) 
= T(1-\rho_{i-1})(\ns{i-1}\rho_{i-1} - \ns{i}\rho_{i}).
\]
By taking the maximum of the two local maxima, we find
\begin{multline*}
  \qmax{i} = T \max \big\{\rho_i\big(\ns{i}(1-\rho_i) - \ns{i-1}(1-\rho_{i-1})\big) \,,  \\
  (1-\rho_{i-1})(\ns{i-1}\rho_{i-1} - \ns{i}\rho_{i}) \big\}.
\end{multline*}
\begin{figure}[ht]
  \centering
  \includegraphics[scale=1]{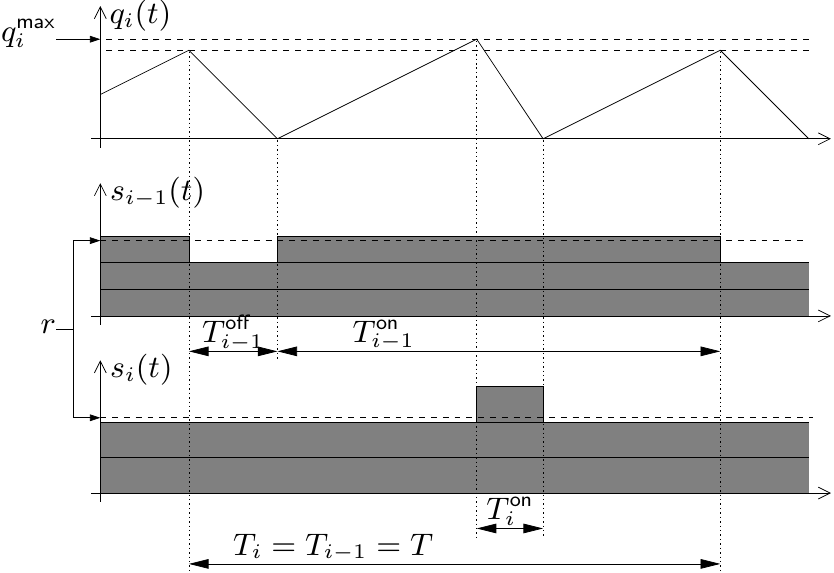}
  \caption{Case (1a): service schedule and queue $q_i(t)$. 
    In this
    example: $r=17$, $\ns{i-1}=6$, $\ns{i}=8$, $T=120$,
    $T_{i-1}^\on=100$, $T_i^\on=15$, $\qmax{i}=90$.}
  \label{fig:case1a}
\end{figure}

\paragraph{Case (1b)}
As shown in Table~\ref{tab:vectors}, the queue size $q_i(t)$ grows if
and only if $\mbar{i-1}+1$ machines are running within function
$F_{i-1}$. The maximum queue size, then, is attained at the instant
when such a machine is switched off. To analyse this case, we
distinguish between two cases: $T_i^\on \geq T_{i-1}^\on$ (illustrated
in Figure~\ref{fig:case1b1}) and $T_i^\on < T_{i-1}^\on$
(Figure~\ref{fig:case1b2}). In both cases, to minimize $\qmax{i}$, the
function $F_i$ must start the extra machine simultaneously as
$F_{i-1}$ start its additional machine in order to reduce the rate of
growth of the $i$-th queue, i.e.
\begin{align}
  \label{eq:1b-ton}
  \ton{i} = \ton{i-1}.
\end{align}
 Note that the queue size for function $F_i$
will therefore be zero when it switches on the additional machine,
\begin{equation*}
  q_i(\ton{i}) = 0.  
\end{equation*}

To compute $\qmax{i}$, we examine both when
$T_i^\on \geq T_{i-1}^\off$ (illustrated in Figure~\ref{fig:case1b1}),
as well as when $T_i^\on < T_{i-1}^\off$ (illustrated in
Figure~\ref{fig:case1b2}). By considering them both together, we find
\begin{multline*}
  \qmax{i} = \max\big\{T_{i-1}^\on\big(\ns{i-1}(1-\rho_{i-1}) - \ns{i}(1-\rho_i)\big),\\
  T_{i-1}^\off  (\ns{i-1}\rho_{i-1} - \ns{i}\rho_{i})
  \big\} 
\end{multline*}
and, by considering the expressions of $T_{i-1}^\on$ and
$T_{i-1}^\off$ of Eq.~(\ref{eq:TiON}) it can be written as:
\begin{multline*}
  \qmax{i} = T\max\big\{\rho_{i-1}\big(\ns{i-1}(1-\rho_{i-1}) - \ns{i}(1-\rho_i)\big),\\
  (1-\rho_{i-1})( \ns{i-1}\rho_{i-1} - \ns{i}\rho_{i})
  \big\}.
\end{multline*}

\begin{figure}[htb]
  \centering
  \includegraphics[scale=1]{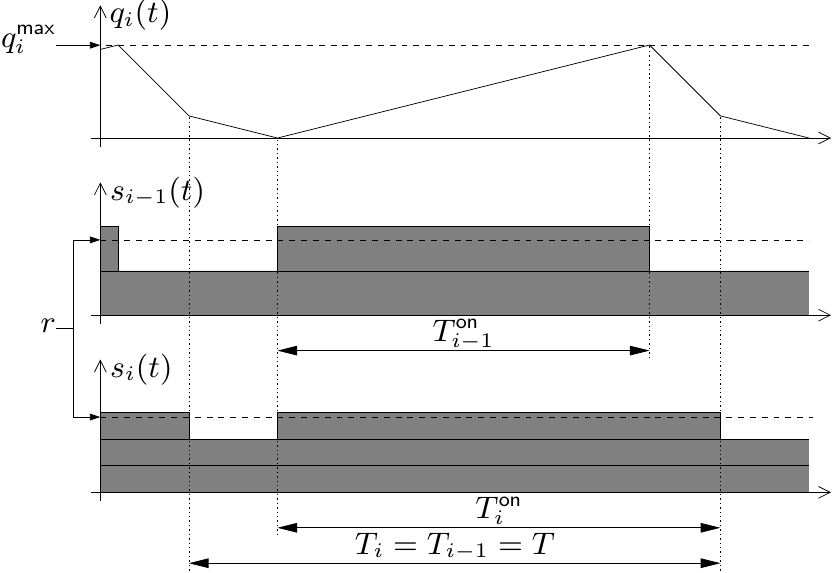}
  \caption{Case (1b), $T_i^\on\geq T_{i-1}^\on$. 
    In this
    example: $r=17$, $\ns{i-1}=10$, $\ns{i}=6$, $T=120$,
    $T_{i-1}^\on=84$, $T_i^\on=100$, $\qmax{i}=168$.}
  \label{fig:case1b1}
\end{figure}

\begin{figure}[htb]
  \centering
  \includegraphics[scale=1]{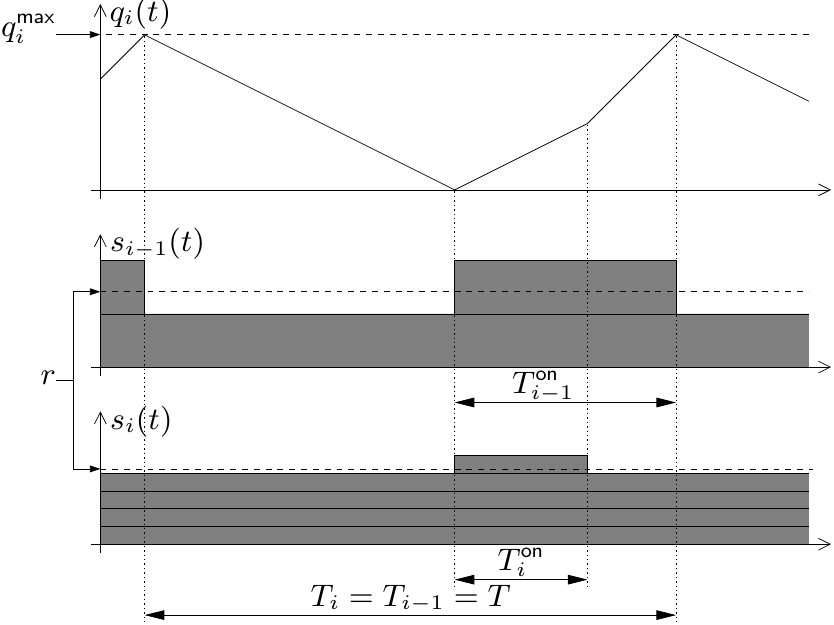}
  \caption{Case (1b), $T_i^\on< T_{i-1}^\on$. 
    In this
    example: $r=17$, $\ns{i-1}=12$, $\ns{i}=4$, $T=120$,
    $T_{i-1}^\on=50$, $T_i^\on=30$, $\qmax{i}=280$.}
  \label{fig:case1b2}
\end{figure}

\paragraph{Case (2a)}
This case is essentially the same as Case~(1b). As shown by
Table~\ref{tab:vectors}, the only difference is that $q_i(t)$ is
reduced whenever $F_i$ has its extra machine on, and grows whenever it
is off. This then implies that the maximum queue size is attained when
$F_i$ switches on the extra machine. To minimize $\qmax{i}$, the queue
size of $F_i$ should therefore be such that the queue is empty when it
switches off the additional machine. Note that this corresponds to
both $F_i$ and $F_{i-1}$ switching off their additional machine
simultaneously (compare with Case~(1b) where the two functions
switches on their additional machine simultaneously). The time when
$F_i$ should switch on its additional machine is thus:
\begin{align}
  \label{eq:2a-ton}
  \ton{i} = \underbrace{\ton{i-1} + T_{i-1}^\on}_{\toff{i-1} = \toff{i}} - T_i^\on = \ton{i-1} + T(\rho_{i-1} - \rho_i).
\end{align}
Note that for this case have to consider both
$T_i^\on \geq T_{i-1}^\on$ and $T_i^\on < T_{i-1}^\on$ when computing
$q_i(\ton{i})$:
\begin{equation}
  \label{eq:2a-qon}
  q_i(\ton{i})\! =\! 
  \begin{cases}
    T(1-\rho_i)(\ns{i}\rho_{i} - \ns{i-1}\rho_{i-1}),\! &T_i^\on\! \geq\! T_{i-1}^\on \\
    T\rho_i\bigl( \ns{i}(1-\rho_i) - \ns{i-1}(1-\rho_{i-1})),\! &T_i^\on\! <\! T_{i-1}^\on
  \end{cases}
\end{equation}
The maximum queue size is, as stated earlier, found when $F_i$
switches on its extra machine. By considering
$T_i^\on \geq T_{i-1}^\on$ and $T_i^\on < T_{i-1}^\on$ together, the
expression for $\qmax{i}$ can be combined into:
\begin{multline*}
  \qmax{i} = T\max\big\{
  \rho_i\big(\ns{i}(1-\rho_i) - \ns{i-1}(1-\rho_{i-1})\big),\\
  (1-\rho_i)(\ns{i}\rho_{i} - \ns{i-1}\rho_{i-1})
  \big\} 
\end{multline*}

\paragraph{Case (2b)}
Table~\ref{tab:vectors} show the similarity between this case and
Case~(1a), with the difference being that for this case $q_i(t)$ only
shrinks when $m_i(t) = \mbar{i}+1$ and $m_{i-1}(t) = \mbar{i-1}$.
Therefore, $q_i(t)$ will always grow when $m_{i-1}(t) = \mbar{i-1}+1$.
To reduce the rate of this growth $F_i$ should therefore have its
extra machine on whenever $F_{i-1}$ has its extra machine
on. Furthermore, to reduce the local maximum attained at the end of
this growth, $F_i$ should switch on its additional machine such that
$q_i(t)$ is empty at the start of it, i.e.  $q_i(\ton{i-1}) =
0$.
Furthermore, since $q_i(t)$ grows when both $F_i$ and $F_{i-1}$ has
its additional machine off, there is also a local maximum for $q_i(t)$
attained when $F_i$ switches on its additional machine. To minimize
this local maximum, $F_i$ should ensure that $q_i(t)$ is empty when it
switches off its additional machine, i.e. $q_i(\toff{i}) = 0$. The
on-switching time should thus be:
\begin{align}
  \label{eq:2b-ton}
  \ton{i} &= \ton{i-1} - \frac{q_i(\ton{i})}{\ns{i}(1-\rho_i) + \ns{i-1}\rho_{i-1}}
\end{align}
where
\begin{align}
  q_i(\ton{i}) &= T_i^\off (\mbar{i-1}\ns{i-1} - \mbar{i}\ns{i}) \nonumber \\
  \label{eq:2b-qon}
  &= T(1-\rho_i) (\ns{i}\rho_{i} + \ns{i-1}\rho_{i-1}).
\end{align}
The other local maximum, occurring when $F_{i-1}$ switches off its
additional machine is therefore:
\begin{align*}
    q_i(\toff{i-1}) &= T_{i-1}^\on\big((\mbar{i-1}+1)\ns{i-1} - (\mbar{i}+1)\ns{i}\big) \\
                    &= T\rho_{i-1}\big(\ns{i-1}(1-\rho_{i-1}) - \ns{i}(1-\rho_i)\big)
\end{align*}

The maximum queue-size for this case thus given by:
\begin{multline*}
  \qmax{i} = T \max \big\{
  \rho_{i-1}\big(\ns{i-1}(1-\rho_{i-1}) - \ns{i}(1-\rho_i)\big), \\
  (1-\rho_i)(\ns{i}\rho_{i} - \ns{i-1}\rho_{i-1}) 
  \big\}. 
\end{multline*}

\paragraph{Conclusion}
By taking the maximum among all four cases,
Equation~(\ref{eq:qMaxPeriodic}) is found and the Lemma is proved.
\end{proof}

The expression of $\qmax{i}$ of Eq.~(\ref{eq:qMaxPeriodic}) suggests a
property that is condensed in the next Corollary.
\begin{corollary}
  The maximum queue qize $\qmax{i}$ at any function $F_i$ is bounded,
  regardless of the rate $r$ of the input.
  \label{cor:boundQ}
\end{corollary}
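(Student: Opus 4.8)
The plan is to read the bound off directly from Lemma~\ref{lem:qimax}, which already isolates the only rate-dependent quantity in play. Since $\qmax{i}=T\,\alpha_i$ with $T$ the (fixed) common switching period, it suffices to show that the coefficient $\alpha_i$ stays below a constant involving only the nominal speeds and not the input rate~$r$.

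First I would observe that each of the four terms defining $\alpha_i$ has the form $\beta\,(u-v)$, where $\beta$ is one of $\rho_{i-1},\rho_i,1-\rho_{i-1},1-\rho_i$ and each of $u,v$ is one of $\ns{i-1}\rho_{i-1},\ \ns{i}\rho_i,\ \ns{i-1}(1-\rho_{i-1}),\ \ns{i}(1-\rho_i)$. By~\eqref{eq:residual} every $\rho_j\in[0,1)$, so $\beta\in[0,1)$ and $u,v\in[0,\max\{\ns{i-1},\ns{i}\})$; hence each term is at most $\max\{\ns{i-1},\ns{i}\}$ in absolute value, which gives $\alpha_i\le\max\{\ns{i-1},\ns{i}\}$ and therefore $\qmax{i}\le T\max\{\ns{i-1},\ns{i}\}$. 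For an internal function ($i\ge 2$) both $\ns{i-1}$ and $\ns{i}$ are fixed per-machine speeds, so this bound is independent of~$r$.

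It then remains to handle $F_1$, for which, as in the proof of Lemma~\ref{lem:qimax}, the predecessor is the dummy $F_0$ with $\ns{0}=r$ and $\rho_0=0$; this is the only place where the rate re-enters $\alpha_i$. Substituting $\rho_0=0$ into the four terms of $\alpha_1$: two of them carry a factor $\rho_0$ or $\ns{0}\rho_0$ and vanish; one equals $-\ns{1}\rho_1\le 0$; and the last, $\rho_1\big(\ns{1}(1-\rho_1)-r\big)$, differs from $(1-\rho_1)\ns{1}\rho_1$ only by subtraction of the nonnegative quantity $\rho_1 r$, so it cannot realise the maximum. Hence $\alpha_1=(1-\rho_1)\rho_1\,\ns{1}\le\ns{1}/4$, which is again free of $r$, and $\qmax{1}\le T\ns{1}/4$.

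The argument is immediate once Lemma~\ref{lem:qimax} is in hand; the only step that needs a moment's care is the $F_1$ case, where $r=\ns{0}$ genuinely occurs inside $\alpha_1$ and one must invoke $\rho_0=0$ to see that it drops out of the governing maximum.
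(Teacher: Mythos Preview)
Your proof is correct and follows the same idea as the paper's: invoke Lemma~\ref{lem:qimax}, note that $\rho_j\in[0,1)$ by~\eqref{eq:residual}, and conclude that the coefficient $\alpha_i$ in $\qmax{i}=T\alpha_i$ is bounded independently of $r$. The paper's own argument is a two-line sketch that stops there; your version is strictly more careful, in particular your explicit treatment of $F_1$ (where $\ns{0}=r$ enters $\alpha_1$ and one must use $\rho_0=0$ to see that the $r$-dependent terms do not realise the maximum) fills a gap that the paper's terse proof leaves to the reader.
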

\begin{proof}
  From the definition of $\rho_i$ of Eq.~\eqref{eq:residual}, it
  always holds that $\rho_i \in [0,1)$.  Hence, from the expression
  of~(\ref{eq:qMaxPeriodic}), it follows that $\qmax{i}$ is always
  bounded.
\end{proof}

The second ingredient needed to solve the optimal design problem is
the expression of the end-to-end delay.

\begin{lemma}
  With a constant input rate, $r_0(t) = r$, the longest end-to-end
  delay $\maxdelay{i}{n}$ for any request passing through functions
  $F_1$ thru $F_n$ is
  \begin{align}
    \label{eq:periodicMaxDelay}
    \maxdelay{1}{n} = T \times \sum_{i=1}^n \delta_i.
  \end{align}
  with $\delta_i$ being an opportune constant that depends on $r$,
  $\ns{i}$, and $\ns{i-1}$.
  \label{lem:E2Edelay}
\end{lemma}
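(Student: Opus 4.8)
The plan is to mirror the structure of the proof of Lemma~\ref{lem:qimax}: express the end-to-end delay as a sum of per-stage contributions, show each contribution is linear in the common period $T$, and collect the proportionality constants into $\delta_i$. The key simplification is that, with a constant input rate and the periodic on/off scheme already fixed, every cumulative curve $R_{i-1}(t)$ and $S_i(t)$ is piecewise-linear and $T$-periodic in its deviation from the nominal line of slope $r$. Hence the delay function $\delay{1}{n}(t)$ is itself $T$-periodic, and its maximum is attained at one of finitely many breakpoints.

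\medskip
\noindent\textbf{Step 1 (Decompose the E2E delay into per-stage delays).} First I would argue that a worst-case request experiences a delay that can be written as $\maxdelay{1}{n} \le \sum_{i=1}^n \maxdelay{i}{i}$, or more precisely identify the exact worst-case alignment. The subtlety here is that $\maxdelay{1}{n}$ is \emph{not} in general equal to $\sum_i \maxdelay{i}{i}$, because the request suffering the worst delay at stage $i$ need not be the one suffering the worst delay at stage $i+1$. The clean way to handle this is the network-calculus concatenation identity: since each $F_i$ (with its fixed periodic schedule driven by the output of $F_{i-1}$) offers a service curve, the end-to-end service curve is the min-plus convolution, and because all curves are $T$-periodic perturbations of the line $t\mapsto rt$, the convolution's horizontal deviation from $R_0$ is again a $T$-periodic quantity. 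I would then invoke the fact (established in the setup and in Lemma~\ref{lem:qimax}) that the relative switch-on times $\ton{i}$ have already been chosen so as to pipeline the queues favorably; with those fixed offsets, the alignment of worst cases across stages is determined, and the total delay is a fixed sum of stage terms.

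\medskip
\noindent\textbf{Step 2 (Each stage term is linear in $T$).} For a fixed stage $i$, the horizontal distance between $R_{i-1}$ and $S_i$ at its worst point is, by exactly the same case analysis (Case~(1a)–(2b)) used in Lemma~\ref{lem:qimax}, a piecewise-linear function whose breakpoints occur at multiples of $T$ scaled by $\rho_i$, $1-\rho_i$, etc. Dividing the worst-case vertical gap $\qmax{i}=T\alpha_i$ by the relevant local slope (which is one of $\ns{i}\rho_i$, $\ns{i}(1-\rho_i)$, $\ns{i-1}\rho_{i-1}$, $\ns{i-1}(1-\rho_{i-1})$ — all independent of $T$) gives a horizontal delay of the form $T\times(\text{constant})$. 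I would define $\delta_i$ to be that constant, depending only on $r$, $\ns{i}$, $\ns{i-1}$ (through $\rho_i$, $\rho_{i-1}$) and the case that applies. Summing over $i=1,\dots,n$ and using the alignment from Step~1 yields $\maxdelay{1}{n} = T\sum_{i=1}^n \delta_i$.

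\medskip
\noindent\textbf{The hard part} will be Step~1: rigorously justifying that the end-to-end worst-case delay decomposes additively across stages given the \emph{specific} choice of relative on-times $\ton{i}$ from Lemma~\ref{lem:qimax}, rather than merely bounding it by $\sum_i \maxdelay{i}{i}$. One route is to track a single ``critical'' request — the one entering $F_1$ at the instant the stage-1 queue is about to peak — and show, case by case, that under the chosen offsets this same request (or a request a fixed $T$-fraction behind it) is also critical at every downstream stage, so the delays literally add. If that alignment fails in some case, the fallback is to prove the sum $\sum_i \maxdelay{i}{i}$ is itself linear in $T$ (which is immediate from Lemma~\ref{lem:qimax} plus Step~2) and take $\delta_i$ from that upper bound, noting it is still tight enough for the design problem since $c$ is then a valid (slightly conservative) bound on $T$. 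Either way the proportionality $\maxdelay{1}{n}=T\sum_i\delta_i$ — or $\le T\sum_i\delta_i$ with the $\delta_i$ as claimed — follows, which is all that the subsequent optimization in~\eqref{eq:costPeriodFunT} requires.
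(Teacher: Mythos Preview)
Your Step~1 is where the real difficulty lies, and you have correctly flagged it --- but the paper resolves it with a different decomposition that sidesteps the issue entirely. Instead of splitting into per-stage delays $\maxdelay{i}{i}$ (which, as you note, do not compose additively in general), the paper telescopes: set $D_i^{*} := \maxdelay{1}{i} - \maxdelay{1}{i-1}$, so that $\maxdelay{1}{n} = \sum_{i=1}^n D_i^{*}$ holds trivially and exactly. The key observation that makes each term computable is that with constant input rate $r$ one has $\maxdelay{1}{i} = \max_t\bigl(t - S_i(t)/r\bigr)$, and since $S_i$ is piecewise linear with slopes $\mbar{i}\ns{i}<r$ and $(\mbar{i}+1)\ns{i}>r$, this maximum is attained precisely at $t=\ton{i}$. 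The increment $D_i^{*}$ is then worked out algebraically from $\ton{i}-\ton{i-1}$ and $q_i(\ton{i})$, both of which were already derived case by case in Lemma~\ref{lem:qimax}; in every case $D_i^{*}$ comes out as $T$ times a constant $\delta_i$ depending only on $r,\ns{i},\ns{i-1}$.

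Your fallback --- bounding $\maxdelay{1}{n}\le\sum_i\maxdelay{i}{i}$ --- yields only an inequality, not the equality the lemma asserts, and so does not prove the statement. Your primary route (tracking one critical request through all stages) could perhaps be made rigorous, but it is strictly harder than telescoping: it would require showing the worst-case request is the \emph{same} at every stage, whereas the telescoping argument needs no such alignment --- it only needs to locate the maximizer of $t - S_i(t)/r$ at each stage separately, which is immediate from the shape of $S_i$.
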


\begin{proof}
  With a constant input $r_0(t)=r$ to the service chain, the maximum
  E2E delay for function $F_i$ is given by
\[
 \maxdelay{1}{i} = \max_t \frac{R_0(t) - S_i(t)}{r} =
 \max_t \Big(t-\frac{S_i(t)}{r}\Big),
\]
with $S_i(t)$ being the cumulative served request by $F_i$, as in
Eq.~\eqref{eq:cumServReq}, and $R_0(t)$ is the cumulative arrived
requests of~\eqref{eq:intR}.  Since $S_i(t)$ is piecewise linear
function, growing with rates $\ns{i}\mbar{i}$ or $\ns{i}(\mbar{i}+1)$
as illustrated in in Figure~\ref{fig:Dstar}, it follows that the
maximum end-to-end delay up to the $i$-th function, $\maxdelay{1}{i}$,
is attained when $F_i$ switches on the additional machine (denoted by
$t_i^\on$), that is
\begin{equation}
  \maxdelay{1}{i}
  = \max_t \Big(t-\frac{S_i(t)}{r}\Big)
  = t_i^\on-\frac{S_i(t_i^\on)}{r}.
  \label{eq:D1iTon}
\end{equation}

\begin{figure}[t]
  \centering
  \includegraphics[width=\columnwidth]{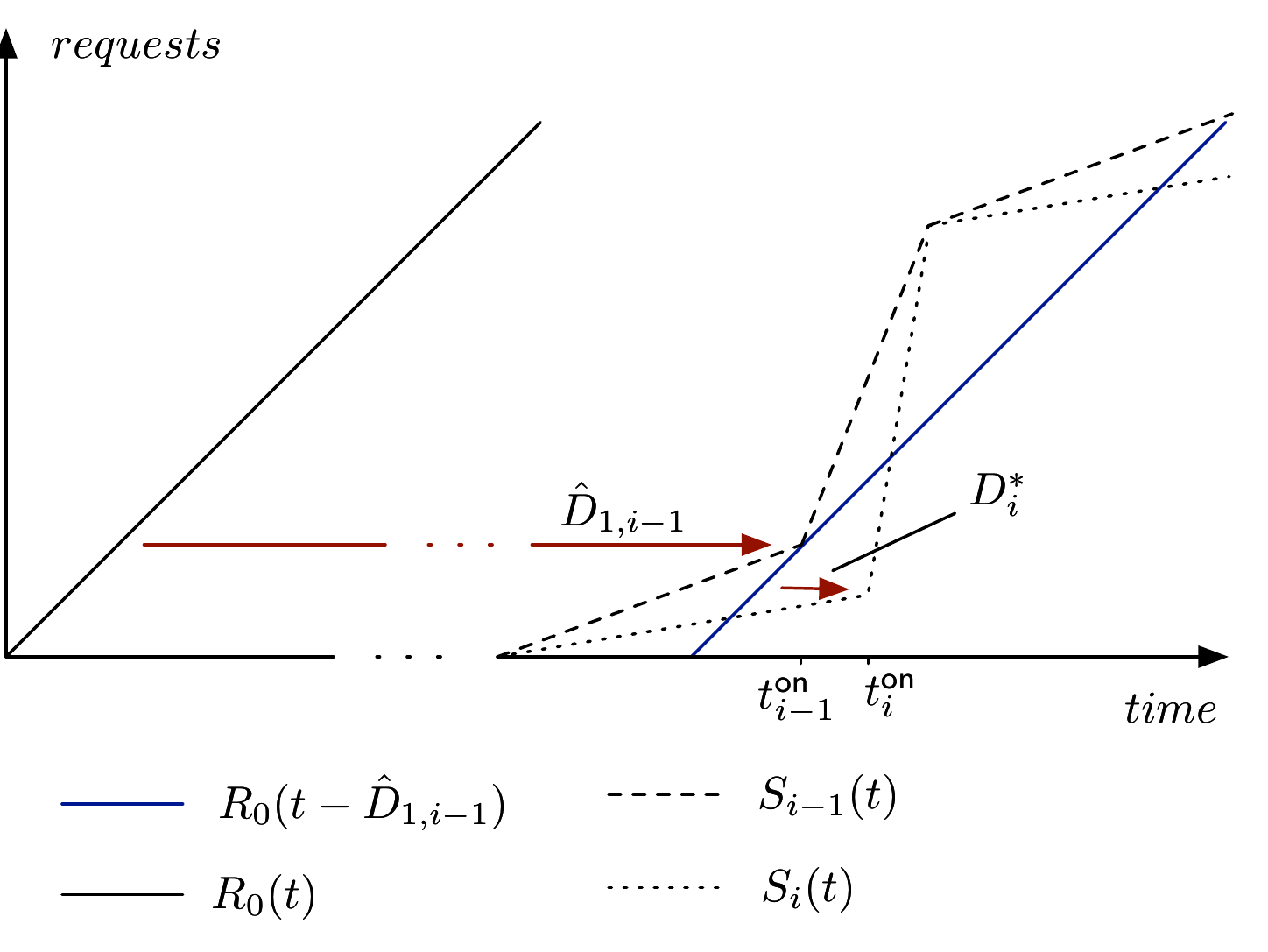}
  \caption{\label{fig:Dstar} Illustration of how function $F_i$ adds
    $D_i^*$ to the maximum E2E delay. $R_0(t)$ is the cumulative
    arrived requests into the service chain, $S_i(t)$ and $S_{i-1}(t)$
    are the cumulative served requests by function $F_i$ and $F_{i-1}$
    respectively. $R_0(t-\maxdelay{1}{i-1})$ is the linear lower-bound
    approximation for $S_{i-1}(t)$. The maximum E2E when adding
    function $F_i$ to the service-chain is then given as
    $\maxdelay{1}{i} = \maxdelay{1}{i-1}+D_i^*$.
  }
\end{figure}

As illustrated in Figure~\ref{fig:Dstar}, function $F_i$ will add
$D_i^*$ to the maximum E2E delay up to function $F_{i-1}$. Therefore,
it is possible to write the maximum E2E delay up to the $i$-th function
as
\begin{equation}
 \label{eq:E2Edelay}
   \maxdelay{1}{i} = \maxdelay{1}{i-1} + D_i^*.
\end{equation}
Equation~\eqref{eq:D1iTon} then implies that

\begin{align*}
  D_i^* &= \maxdelay{1}{i} - \maxdelay{1}{i-1} \\
        &= \ton{i} - \frac{S_i(\ton{i})}{r} - \ton{i-1} + \frac{S_{i-1}(\ton{i-1})}{r} \\
        &= \ton{i} - \ton{i-1} + \frac{S_{i-1}(\ton{i-1}) - S_i(\ton{i})}{r} \\
        &= \ton{i} - \ton{i-1} + \frac{\overbrace{S_{i-1}(\ton{i}) - S_i(\ton{i})}^{q_i(t_i^\on)} + \int_{\ton{i}}^{\ton{i-1}}s_{i-1}(x)dx}{r} \\
        &= \ton{i} - \ton{i-1} + \frac{q_i(t_i^\on)}{r} + \frac{1}{r}\int_{\ton{i}}^{\ton{i-1}}s_{i-1}(x)dx \\
        &= \ton{i} - \ton{i-1} + \frac{q_i(\ton{i})}{r} + (\ton{i-1} - \ton{i})\frac{s_{i-1}^*}{r} \\
        &= \frac{q_i(\ton{i})}{r} + (\ton{i} - \ton{i-1})(1-\frac{s_{i-1}^*}{r}),
\end{align*}
where
$\int_{\ton{i}}^{\ton{i-1}}s_{i-1}(x)dx = (\ton{i-1}-\ton{i})\times
s_{i-1}^*$
since $s_{i-1}(t)$ is a piecewise constant function, changing value
only in $\ton{i-1}$. The values of $s_{i-1}^*$ depend on whether
$F_{i-1}$ has its additional machine on or off during this
time-interval. It should be noted that when
$t_i^\on \geq t_{i-1}^\on$, function $F_{i-1}$ will start its
additional machine before $F_{i}$ does so, and $F_{i-1}$ will
therefore have $(\mbar{i-1}+1)$ machines on during the time-interval
$[t_{i-1}^\on,\, t_i^\on]$. On the other hand, if
$t_i^\on < t_{i-1}^\on$, it follows that $F_i$ will start its
additional machine before $F_{i-1}$ does so, and $F_{i-1}$ will only
have $\mbar{i-1}$ machines on during the time-interval
$[t_{i-1}^\on,\, t_i^\on]$. Hence, $s_{i-1}^*$ can be written as:
\begin{equation*}
    s_{i-1}^* = 
  \begin{cases}
    \ns{i-1}(\mbar{i-1}+1), & \ton{i} \geq \ton{i-1} \\
    \ns{i-1}\mbar{i-1}, & \ton{i} < \ton{i-1}
  \end{cases}.
\end{equation*}
It should also be noted that $\ton{i}$ and $\ton{i-1}$ are such that
the time between them is the smallest possible. Hence,
$(\ton{i}-\ton{i-1})$ might be positive or negative, and corresponds to
the expressions derived in Lemma~\ref{lem:qimax},
Eqs.~\eqref{eq:1a-ton}, \eqref{eq:1b-ton}, \eqref{eq:2a-ton},
and~\eqref{eq:2b-ton} for Case~(1a)--Case~(2b)
respectively. 

When $t_i^\on \geq t_{i-1}^\on$ we can therefore write $D_i^*$ as
\begin{align}
  D_i^* 
  &= (\ton{i} - \ton{i-1})\Big(1-\frac{\ns{i-1}(\mbar{i-1}+1)}{r}\Big) + \frac{q_i(\ton{i})}{r} \nonumber \\
  &= \frac{\ns{i-1}}{r}(\ton{i} - \ton{i-1})\Big(\underbrace{\frac{r}{\ns{i-1}} - \mbar{i}}_{=\rho_{i-1}} - 1\Big) + \frac{q_i(\ton{i})}{r} \nonumber \\
  &= \frac{\ns{i-1}}{r}(\ton{i} - \ton{i-1})(\rho_{i-1} - 1) + \frac{q_i(\ton{i})}{r}.
\label{eq:Distar-1}
\end{align}
For the opposite case, when $t_i^\on < t_{i-1}^\on$ we instead get
\begin{align}
  D_i^* 
  &= (\ton{i} - \ton{i-1})\Big(1-\frac{\ns{i-1}\mbar{i-1}}{r}\Big) + \frac{q_i(\ton{i})}{r} \nonumber \\
  &= \frac{\ns{i-1}}{r}(\ton{i} - \ton{i-1})\Big(\frac{r}{\ns{i-1}} - \mbar{i}\Big) + \frac{q_i(\ton{i})}{r} \nonumber \\
  &= \frac{\ns{i-1}}{r}(\ton{i} - \ton{i-1})\rho_{i-1} + \frac{q_i(\ton{i})}{r}.
    \label{eq:Distar-2}
\end{align}

In Lemma~\ref{lem:qimax}, both $(\ton{i} - \ton{i-1})$ and
$q_i(\ton{i})$ were derived for Case~(1a)--(2b) in
Eqs.~\eqref{eq:1a-qon}--\eqref{eq:2b-qon}.  For each of the four
cases, $D_i^*$ is given by: \newline
\paragraph{Case~(1a)} For this case, it always holds that
$t_i^\on \geq t_{i-1}^\on$. Hence by inserting $q_i(t_i^\on)$ of
Eq.~\eqref{eq:1a-qon} and $(\ton{i} - \ton{i-1})$ of
Eq.~\eqref{eq:1a-ton} into Eq.~\eqref{eq:Distar-1} we can write
$D_i^*$ as
\begin{equation*}
  D_i^* =  T\times\frac{1}{r}
  \ns{i}\rho_i^2 \dfrac{\ns{i}(1-\rho_i) - \ns{i-1}(1-\rho_{i-1})}{\ns{i-1}(1-\rho_{i-1}) + \ns{i}\rho_i}.
\end{equation*}
\paragraph{Case~(1b)} For this case Eq.~\eqref{eq:1b-ton} imply that
$t_i^\on = t_{i-1}^\on$ and that $q_i(t_i^\on)$ is always $0$. Hence,
$D_i^*$ will always be 0, implying that $\delta_i=0$.

\paragraph{Case~(2a)} Here one must distinguish between two cases:
$T_i^\on \geq T_{i-1}^\on$ and $T_i^\on < T_{i-1}^\on$. When
$T_i^\on \geq T_{i-1}^\on$ it always hold that
$t_i^\on \leq t_{i-1}^\on$. Hence, by inserting
$(\ton{i} - \ton{i-1})$ and $q_i(t_i^\on)$ given by
Eqs.~\eqref{eq:2a-ton}--\eqref{eq:2a-qon} into Eq.~\eqref{eq:Distar-2}
we can write $D_i^*$ as
\begin{equation*}
  D_i^* = T\times\frac{1}{r}
  \big( \ns{i-1}\rho_{i-1}(\rho_{i-1}-\rho_i) + (1-\rho_i)(\ns{i}\rho_i - \ns{i-1}\rho_{i-1}  \big).
\end{equation*}
When $T_i^\on < T_{i-1}^\on$, it instead holds that
$t_i^\on \geq t_{i-1}^\on$. Therefore, by inserting
Eqs.~\eqref{eq:2a-ton}--\eqref{eq:2a-qon} into Eq.~\eqref{eq:Distar-1}
we can write $D_i^*$ as
\begin{multline*}
  D_i^* = T\times\frac{1}{r} \Big(\rho_i\big(\ns{i}(1-\rho_i) -
  \ns{i-1}(1-\rho_{i-1}) \big) + \\ \ns{i-1}(\rho_{i-1}-1)(\rho_{i-1}
  - \rho_i)\Big).
\end{multline*}
\paragraph{Case~(2b)} For this case, it always holds that
$t_i^\on \leq t_{i-1}^\on$. Therefore, by inserting
$(\ton{i} - \ton{i-1})$ and $q_i(t_i^\on)$ given by
Eq.~\eqref{eq:2b-ton}--\eqref{eq:2b-qon} into Eq.~\eqref{eq:Distar-2}
we can write $D_i^*$ as
\begin{equation*}
  D_i^* = T\times\frac{1}{r} \ns{i}(1-\rho_i)^2  \dfrac{\ns{i}\rho_i
+ \ns{i-1}\rho_{i-1}}{\ns{i}(1-\rho_i) + \ns{i-1}\rho_{i-1}}.
\end{equation*}

\paragraph{Conclusion}
It therefore follows that for all four cases it is possible to write
$ D_i^* = T\times \delta_i$, with $\delta_i$ being an opportune
constant depending only on $r$, $\ns{i}$, and $\ns{i-1}$. Note that
Eqs.~\eqref{eq:mbarDef}--\eqref{eq:residual} imply that $\rho_i$ (and
$\rho_{i-1}$) depend on $r$ and $\ns{i}$ (and
$\ns{i-1}$). Equation~\eqref{eq:E2Edelay} then implies that the maximum
queueing delay is
$ \maxdelay{1}{n} = \sum_{i=1}^nD_i^* = T\times\sum_{i=1}^n \delta_i$,
and the lemma is proved.
\end{proof}

\section{Disturbances}
\label{sec:disturbances}

Until now, the analysis of Sections~\ref{sec:switch},
\ref{sec:linear}, and~\ref{sec:periodic} addressed the case with a
constant input rate $r_0(t)=r$. However, variations from such an ideal
condition can easily be are modeled by adding disturbances. An impulse
disturbance of mass $d_i$ will affect both the maximum queue-size
$\qmax{i}$ and the on-time $T_i^\on$ needed by the addition machine of
$F_i$ to process the extra work. If we denote by $\qhat{i}$ the
\emph{largest queue-size for a system without disturbances}, then the
maximum queue size that can avoid an overflow is
\begin{equation}
  \label{eq:qmax-with-disturbance}
  \qmax{i} = \qhat{i} + d_i.
\end{equation}
The additional time needed by $F_i$ to process the disturbances is
$d_i/\ns{i}$. The only time that the function can find ``free time''
to process work this extra work, is when it normally would be off,
i.e. during $T_i^\off$ in a period. Depending on how big this
disturbance is, it might need several periods worth of
$T_i^\off\text{-time}$ in order to process the extra work. The total
\on-time needed to handle the disturbance, along with the usual
incoming request is therefore:
\begin{equation}
  \label{eq:TionTilde}
  \tilde{T}_i^\on = T\underbrace{\floor{\frac{d_i/\ns{i}}{T_i^\off}}}_{\makebox[0pt][c]{\scriptsize \text{number of full periods needed}}}
  + T_i^\on + T_i^\off\overbrace{\big(\frac{d_i/\ns{i}}{T_i^\off} - \floor{\frac{d_i/\ns{i}}{T_i^\off}}\big)}^{\makebox[0pt][c]{\scriptsize \text{fraction of final $T_i^\off$ needed $\in[0,1)$}}}
\end{equation}
It will therefore take $\floor{(d_i/\ns{i})/T_i^\off} + 1 $
periods before the extra work is processed and the schedule can return
to normal. This assumes that the function does not get any extra
disturbances while processing the first one. Note that
$\tilde{T}_i^\on \rightarrow \infty$ as $T_i^\off \rightarrow 0$,
therefore, should $\tilde{T}_i^\on$ grow very large it would be
necessary to switch on yet another machine. If such a thing would
happen in the $i$-th function, it would thus need to switch between
using $\mbar{i}+1$ and $\mbar{i}+2$ machines, which is the problem
studied in this technical report.

The extra on-time needed changes the desired stop-time for the
additional machine, and if $F_i$ is switching between $\mbar{i}$ and
$\mbar{i}+1$ machines, this would be computed as:
\begin{equation}
  \label{eq:toff-with-distubance}
  t_i^\off = t_i^\on + \tilde{T}_i^\on,
\end{equation}
where $\tilde{T}_i^\on$ is given by \eqref{eq:TionTilde}. Note that
this assumes that $d_i$ is known. Note that it could be measured
indirectly by taking the difference of the \emph{real queue-size} when
switching on the extra machine, denoted by $\tilde{q}_i(t_i^\on)$, and
the \emph{expected queue size} $q_i(t_i^\on)$:
\[
  d_i = \tilde{q}_i(t_i^\on) - q_i(t_i^\on).
\]
Note that Eqs.~\eqref{eq:qmax-with-disturbance} and
\eqref{eq:toff-with-distubance} imply that handling a disturbance will
yield a cost increase due to the extra on-time needed and due to the
extra queue-size needed. However, it will not affect the solution of
the optimization problem, since this added cost is constant and does
not depend on the variable of the optimization problem. In
Figure~\ref{fig:disturbance} we illustrate how the modeling errors can
be modeled as a disturbance and how one can stay on for a longer time
in order to process the extra load and ``catch up''.

\begin{figure}[ht]
  \centering
  \includegraphics[width=\columnwidth]{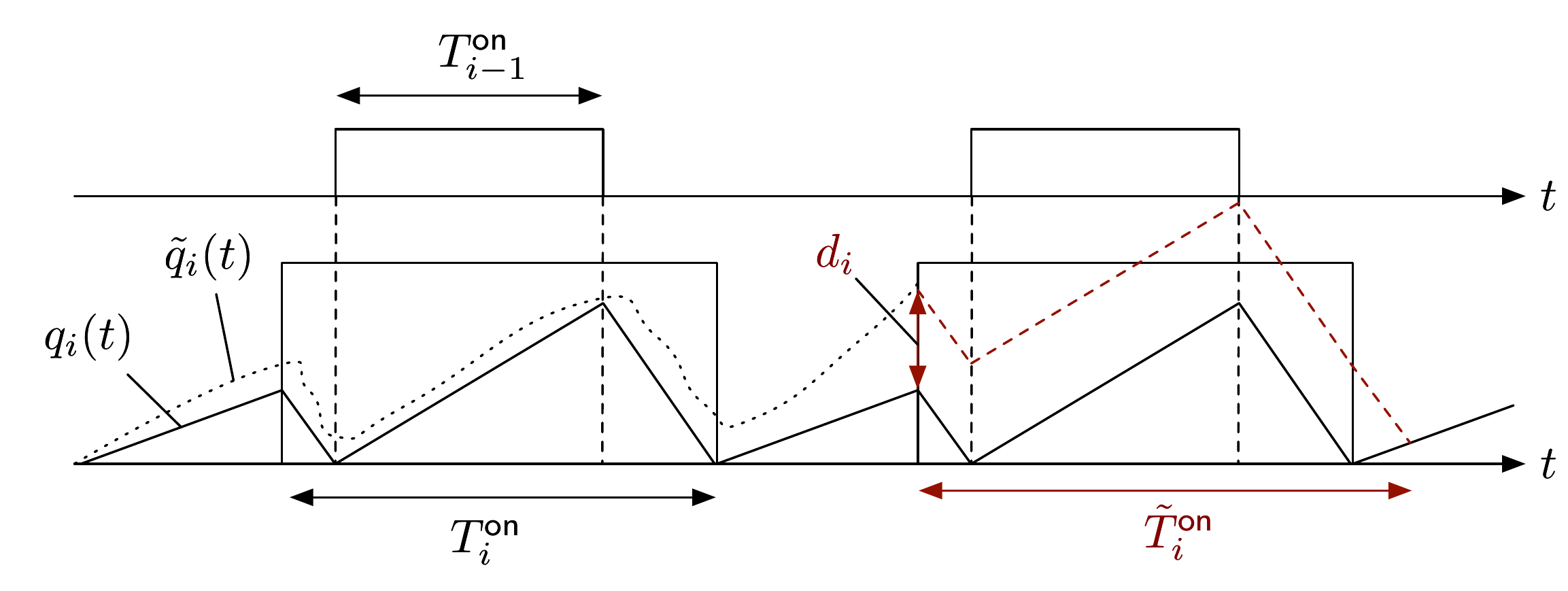}
  \caption{\label{fig:disturbance} Illustration of how a disturbance
    can capture modeling errors and how one can process the
    disturbance in order to ``catch up'' with the model.}
\end{figure}

\section{Summary}
\label{sec:summary}

In this technical report we have developed a general mathematical
model for a service-chain residing in a Cloud environment. This model
includes an input model, a service model, and a cost model. The
input-model defines the input-stream of requests to each NFV along
with end-to-end deadlines for the requests, meaning that they have to
pass through the service-chain before this deadline.  In the
service-model, we define an abstract model of a NFV, in which requests
are processed by a number of machines inside the service function. It
is assumed that each function can change the number of machines that
are up and running, but doing so is assumed to take some time. The
cost-model defines the cost for allocating compute- and storage
capacity, and naturally leads to the optimization problem of how to
allocate the resources. We analyze the case with a constant
input-stream of requests and derive control-strategies for this.  This
is a simplified case it will constitute the foundation of adaptive
schemes to time-varying requests in the future.

We plan to extend this work by allowing for a dynamic input. It would
also be very natural to extend the model to account for uncertainties
in the service-rate. With this uncertainty it would be beneficial to
close a feedback loop around the service rate in order to guarantee a
desired service rate.

\paragraph{Acknowledgements}
The authors would like to thank Karl-Erik \AA{}rz\'en and Bengt
Lindoff for the useful comments on early versions of this technical
report.

\paragraph{Source code}
The source code used to compute the solution of the examples in
Section~\ref{sec:linear} and~\ref{sec:periodic} can be found
on Github at \url{https://github.com/vmillnert/REACTION-source-code}.

\bibliographystyle{IEEEtran}
\bibliography{service}

\end{document}